\renewcommand{\fnum@figure}{Fig. \thefigure}
\newcommand{\removelatexerror}{\let\@latex@error\@gobble}
\theoremstyle{proposition}
\newtheorem{proposition}{Proposition}
\theoremstyle{remark}
\begin{document}

\title{Joint Design of Radar Waveform and Detector via End-to-end Learning with
Waveform Constraints}
\author{Wei Jiang, \IEEEmembership{Student Member, IEEE,} Alexander M.
	Haimovich, \IEEEmembership{Fellow, IEEE} and Osvaldo
	Simeone, \IEEEmembership{Fellow, IEEE}
\thanks{W. Jiang and A. M. Haimovich are with the Center for Wireless Information Processing (CWiP), Department of Electrical and Computer Engineering, New Jersey Institute of Technology, Newark, NJ 07102 USA (e-mail: wj34@njit.edu;
	haimovich@njit.edu).}
\thanks{Osvaldo Simeone is with the King's Communications, Information Processing $\&$ Learning (KCLIP) Lab, Department of Engineering, King’s College London, London WC2R 2LS, UK (e-mail: osvaldo.
	simeone@kcl.ac.uk). His work was supported by the European Research Council (ERC) under the European Union’s Horizon 2020 Research and
	Innovation Programme (Grant Agreement No. 725731).}}

\maketitle

\begin{abstract}
	
The problem of data-driven joint design of transmitted waveform and detector in a radar system is addressed in this paper. We propose two novel learning-based approaches to waveform and detector design based on end-to-end training of the radar system. The first approach consists of alternating supervised training
of the detector for a fixed waveform and reinforcement learning of the transmitter for a fixed detector. In the second approach, the transmitter and detector are trained simultaneously. Various operational waveform constraints, such as peak-to-average-power ratio (PAR) and spectral compatibility, are incorporated into the design. Unlike traditional radar design methods that rely on rigid mathematical models with limited applicability, it is shown that radar learning can be robustified by training the detector with synthetic data generated from multiple statistical models of the environment.
Theoretical considerations and results show that the proposed methods
are capable of adapting the transmitted waveform to environmental conditions while satisfying design constraints.
\end{abstract}


\begin{IEEEkeywords}
Waveform design, radar detector design, waveform constraints, reinforcement learning, supervised learning.
\end{IEEEkeywords}

\section{Introduction}
\subsection{Context and Motivation}
Design of radar waveforms and detectors has been a topic of great
interest to the radar community (see e.g. \cite{Kay1998}-\cite{Kay 2007}).
For best performance, radar waveforms and detectors should be designed jointly \cite{Richards 2010}, \cite{MU}. Traditional joint design of waveforms and detectors typically relies on
mathematical models of the environment, including targets, clutter,
and noise. In contrast, this paper proposes data-driven approaches based on end-to-end learning of radar systems, in which reliance on rigid mathematical models of targets, clutter and noise is relaxed.

Optimal detection in the Neyman-Pearon (NP) sense guarantees highest probability of detection for a specified probability of false alarm \cite{Kay1998}. 
The NP detection test relies on the likelihood (or log-likelihood) ratio, which is the ratio of probability density functions
(PDF) of the received signal conditioned on the presence or absence of a target.
Mathematical tractability of models of the radar environment
plays an important role in determining the ease of implementation of an optimal detector. For some target, clutter and noise models, the structure of optimal detectors is well known \cite{Van 2004}-\cite{Richards 2005}. For example, closed-form expressions of the NP test metric are available when the applicable models are Gaussian \cite{Richards 2005}, and, in some cases, even for
non-Gaussian models \cite{Sangston 1994}. 

However, in most cases involving non-Gaussian models, the structure of optimal detectors generally involves intractable numerical integrations, making the implementation of such detectors computationally intensive \cite{Gini 1997}, \cite{Sangston 1999}.
For instance, it is shown in \cite{Gini 1997} that the NP detector requires a numerical integration with respect to the texture variable of the K-distributed clutter, thus precluding a closed-form solution. Furthermore, detectors designed based on a specific mathematical model of environment suffer performance degradation when the actual environment differs from
the assumed model \cite{Farina 1986}, \cite{Farina 1992}. Attempts to robustify performance by designing optimal detectors based on mixtures of random variables quickly run aground due to mathematical intractability.

Alongside optimal detectors, optimal radar waveforms may also be designed based on
the NP criterion. Solutions are known for some simple target, clutter and
noise models (see e.g. \cite{Delong1967}, \cite%
{Kay 2007}). However, in most cases, waveform design based on direct
application of the NP criterion is intractable, leading to various
suboptimal approaches. For example, mutual information, J-divergence and
Bhattacharyya distance have been studied as objective functions for waveform
design in multistatic settings \cite{Kay 2009}-\cite{Jeong 2016}.

In addition to target, clutter and noise models,
waveform design may have to account for various operational constraints. For
example, transmitter efficiency may be improved by constraining the
peak-to-average-power ratio (PAR) \cite{DeMaio2011}-\cite{Wu 2018}. 
A different constraint relates to the requirement of coexistence of radar and communication
systems in overlapping spectral regions. The National
Telecommunications and Information Administration (NTIA) and Federal
Communication Commission (FCC) have allowed sharing of some of the radar frequency
bands with commercial communication systems \cite{NTIA}. In order to protect the communication systems from radar interference, radar waveforms should be
designed subject to specified compatibility constraints. The design of radar
waveforms constrained to share the spectrum with communications systems has recently developed into an active area of research with a growing body of
literature \cite{Aubry2016}-\cite{Tang2019}.

Machine learning has been successfully applied to solve problems for which
mathematical models are unavailable or too complex to yield optimal
solutions, in domains such as computer vision \cite{ML 1.1}, \cite{ML 1.2}
and natural language processing \cite{ML 2.1}, \cite{ML 2.2}. Recently,
a machine learning approach has been proposed for implementing the physical layer of communication systems. Notably, in \cite{OShea 2017}, it is proposed to jointly design the transmitter and receiver of communication systems via end-to-end learning.
Reference \cite{PAR_OFDM} proposes an
end-to-end learning-based approach for jointly minimizing PAR and bit error rate in
orthogonal frequency division multiplexing systems. This approach requires
the availability of a known channel model. For the case of an unknown
channel model, reference \cite{Aoudia 2019} proposes an alternating training
approach, whereby the transmitter is trained via
reinforcement learning (RL) on the basis of noiseless feedback from the receiver,
while the receiver is trained by supervised learning. In \cite{SPSA}, the
authors apply simultaneous perturbation stochastic optimization for
approximating the gradient of a transmitter's loss function. A detailed review of the state of the art can be found in \cite{osvaldo2} (see also \cite{osvaldo3}-\cite{osvaldo5} for recent work).

In the radar field, learning machines trained in a supervised manner based on a suitable loss function have been shown to approximate the performance of the NP detector \cite{Moya 2009}, \cite{Moya 2013}. As a
representative example, in \cite{Moya 2013}, a neural network trained in a supervised manner using data that includes Gaussian interference, has been shown to approximate the performance of the NP detector. Note that design of the NP detector requires express knowledge of the Gaussian nature of the interference, while the neural network is trained with data that happens to be Gaussian, but the machine has no prior knowledge of the statistical nature of the data.  

\subsection{Main contributions}
In this work, we introduce two learning-based approaches for the joint
design of waveform and detector in a radar system. Inspired by \cite{Aoudia 2019}, end-to-end learning of a radar system is implemented by alternating supervised learning of the detector for a fixed waveform, and RL-based learning of the transmitter for a fixed detector. 
In the second approach, the learning of the detector and waveform are executed simultaneously, potentially speeding up training in terms of required radar transmissions to yield the training samples compared alternating training. 
In addition, we extend the problem formulation to include training of waveforms with PAR or spectral compatibility constraints. 

The main contributions of this paper are summarized as follows:

\begin{enumerate}
\item We formulate a radar system architecture based on the training of the detector and the transmitted waveform, both implemented as feedforward
multi-layer neural networks.

\item We develop two end-to-end learning algorithms for detection and waveform generation. In the first learning algorithm, detector and transmitted waveform are trained alternately: For a fixed waveform, the detector is trained using supervised learning so as to approximate the NP detector; and for a fixed detector, the transmitted waveform is trained via policy gradient-based RL. In the second algorithm, the detector and transmitter are trained simultaneously.  

\item We extend learning algorithms to incorporate waveform constraints,
specifically PAR and spectral compatibility constraints.

\item We provide theoretical results that relate alternating and simultaneous training by computing the gradients of the loss functions optimized by both methods. 

\item We also provide theoretical results that justify the use of RL-based transmitter training by comparing the gradient used by this procedure with the gradient of the ideal model-based likelihood function.
\end{enumerate}

This work extends previous results presented in the conference version \cite
{Wei 2019NN}. In particular, reference \cite{Wei 2019NN} proposes a learning algorithm, whereby supervised training of the radar detector is alternated with RL-based training of the unconstrained transmitted waveforms. 
As compared to the conference version \cite{Wei 2019NN}, this paper studies also
a simultaneous training; it develops methods for learning radar waveforms with various operational waveform constraints; and it provides a theoretical results regarding the relationship between alternating training and simultaneous training, as well as regarding the adoption of RL-based training of the transmitter.

The rest of this paper is organized as follows. A detailed system
description of the end-to-end radar system is presented in Section II.
Section III proposes two iterative algorithms of jointly training the
transmitter and receiver. Section IV provides theoretical properties of gradients
Numerical results are reported in Section V. Finally, conclusions are drawn
in Section VI.

Throughout the paper bold lowercase and uppercase letters
represent vector and matrix, respectively. The conjugate, the transpose, and the conjugate transpose operator are denoted by the symbols $(\cdot)^{*}$, $(\cdot)^{T}$, and $(\cdot)^{H}$, respectively.
The notations $\mathbb{C}^{K}$ and $\mathbb{R}^{K}$ represent sets of $K$-dimensional vectors of complex and
real numbers, respectively. The notation $|\cdot |$
indicates modulus, $||\cdot ||$ indicates the Euclidean norm, and 
$\mathbb{E}_{x\sim p_{x}}\{\cdot \}$ indicates the expectation of the argument with respect to the distribution of the random variable $x\sim p_{x}$, respectively. 
$\Re(\cdot )$ and $\Im (\cdot )$ stand for real-part and imaginary-part of the
complex-valued argument, respectively. The letter $j$ represents the
imaginary unit, i.e., $j=\sqrt{-1}$. The gradient of a function $f$: $%
\mathbb{R}^{n}\rightarrow \mathbb{R}^{m}$ with respect to $\mathbf{x}%
\in \mathbb{R}^{n}$ is $\nabla _{\mathbf{x}}f(\mathbf{x})\in \mathbb{R}%
^{n\times m}$.
\section{Problem Formulation}

Consider a pulse-compression radar system that uses the baseband transmit signal
\begin{equation}
	x(t)=\sum_{k=1}^{K}y_k \zeta\big( t- [k-1]T_c\big), \label{eq: time tx signal}
\end{equation}
where $\zeta(t)$ is a fixed basic chip pulse, $T_c$ is the chip duration, and $\{y_k\}_{k=1}^K$ are complex deterministic coefficients. The vector $\mathbf{y}\triangleq[ y_1,\dots, y_K ]^T$ is referred to as the fast-time \emph{waveform} of the radar system, and is subject to design.

The backscattered baseband signal from a stationary point-like target is given by
\begin{equation}
	z(t)=\alpha x(t-\tau) + c(t) + n(t)  \label{eq: time rx signal},
\end{equation}
where $\alpha$ is the target complex-valued gain, accounting for target backscattering and channel propagation effects; $\tau$ represents the target delay, which is assumed to satisfy the target detectability condition condition $\tau >\!\!>KT_c$; $c(t)$ is the clutter component; and $n(t)$ denotes
signal-independent noise comprising an aggregate of thermal noise, interference, and jamming. The clutter component $c(t)$ associated with a detection test performed at $\tau=0$ may be expressed 
\begin{equation}
	c(t)=\sum_{g=-K+1}^{K-1}\gamma_g x\big( t- g T_c \big), \label{eq: time clutter}
\end{equation}
where $\gamma_g$ is the complex clutter scattering coefficient at time delay $\tau=0$ associated with the $g$th range cell relative to the cell under test. Following chip matched filtering with $\zeta^*(-t)$, and sampling at $T_c-$spaced time instants
$t=\tau + [k-1] T_c$ for $k\in \{1, \dots K\}$, the $K\times 1$ discrete-time received signal $\mathbf{z}=[z(\tau), z(\tau+T_c),
\dots, z(\tau + [K-1]T_c)]^T$ for the range cell under test containing a point target with complex amplitude $\alpha$, clutter and noise can be written as
\begin{equation}
\mathbf{z}=\alpha \mathbf{y} + \mathbf{c} + \mathbf{n}, \label{eq: rx}
\end{equation}
where $\mathbf{c}$ and $\mathbf{n}$ denote, respectively, the clutter and noise vectors.

Detection of the presence of a target in the range cell under test is formulated as the following binary hypothesis testing problem:
\begin{equation}
	\left\{ \begin{aligned} &\mathcal{H}_0:{\mathbf{z}}={\mathbf{c}}+{\mathbf{n}} \\
		&\mathcal{H}_1:{\mathbf{z}}=\alpha \mathbf{y}+{\mathbf{c}}+{\mathbf{n}}.
	\end{aligned} \right.  \label{eq:binary hypo}
\end{equation}
In traditional radar design, the golden standard for detection is provided by the NP criterion of maximizing the probability of detection for a given probability of false alarm. Application of the NP criterion leads to the likelihood ratio test 
\begin{equation}
	\Lambda(\mathbf{z})=\frac{p(\mathbf{z}|\mathbf{y}, \mathcal{H}_1)}{p(\mathbf{z}|\mathbf{y}, \mathcal{H}_0)}\mathop{\gtrless}_{\mathcal{H}_0}^{\mathcal{H}_1} T_{\Lambda}, \label{eq: lrt}
\end{equation}
where $\Lambda(\mathbf{z})$ is the likelihood ratio, and $T_{\Lambda}$ is the detection threshold set based on the probability of false alarm constraint \cite{Richards 2010}. The NP criterion is also the golden standard for designing a radar waveform that adapts to the given environment, although, as discussed earlier, a direct application of this design principle is often intractable.

The design of optimal detectors and/or waveforms under the NP criterion requires on channel models of the radar environment, namely, knowledge of the conditional probabilities $p(\mathbf{z}|\mathbf{y}, \mathcal{H}_i)$ for $i=\{0,1\}$. The channel model $p(\mathbf{z}|\mathbf{y}, \mathcal{H}_i)$ is the likelihood of the observation $\mathbf{z}$ conditioned on the transmitted waveform $\mathbf{y}$ and hypothesis $\mathcal{H}_i$. In the following, we introduce an end-to-end radar system in which the detector and waveform are jointly learned in a data-driven fashion.

\subsection{End-to-end radar system}
The end-to-end radar system illustrated in Fig. 1 comprises a transmitter
and a receiver that seek to detect the presence of a target. Transmitter and receiver are implemented as two
separate parametric functions $f_{\boldsymbol{\theta}_T}(\cdot)$ and $f_{
	\boldsymbol{\theta}_R}(\cdot)$ with trainable parameter vectors $\boldsymbol{
	\theta}_T$ and $\boldsymbol{\theta}_R$, respectively.

\begin{figure}
		\vspace{-3ex} \hspace{25ex} \includegraphics[width=1.3
	\linewidth]{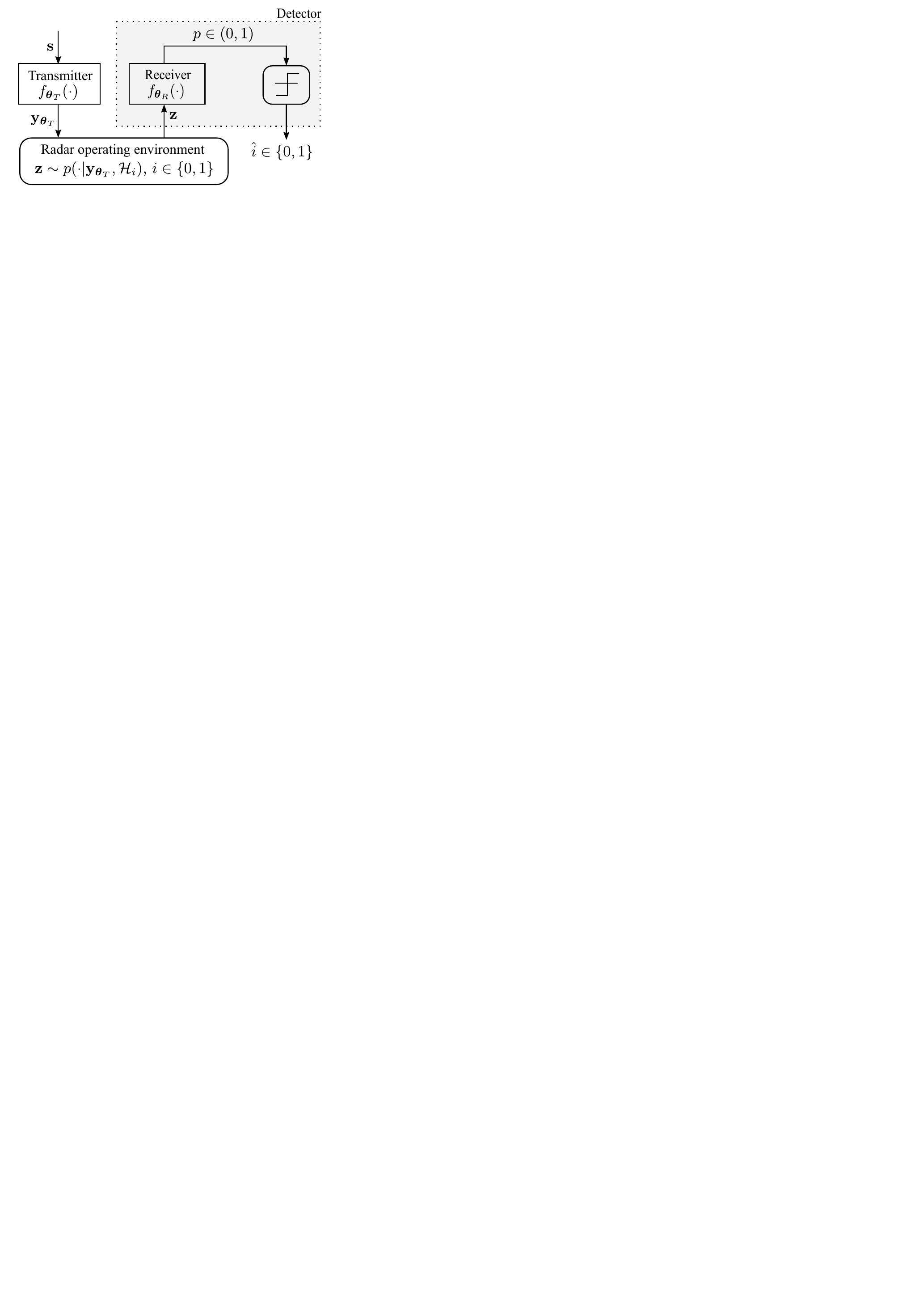} \vspace{-141ex}
	\caption{An end-to-end radar system operating over an unknown radar operating environment. Transmitter and receiver are implemented as two
		separate parametric functions $f_{\boldsymbol{\theta}_T}(\cdot)$ and $f_{
			\boldsymbol{\theta}_R}(\cdot)$ with trainable parameter vectors $\boldsymbol{
			\theta}_T$ and $\boldsymbol{\theta}_R$, respectively.}
	\label{f:end_to_end_real}
\end{figure}

As shown in Fig. \ref{f:end_to_end_real}, the input to the transmitter is a user-defined
initialization waveform ${\mathbf{s}}\in \mathbb{C}^{K}$. The transmitter outputs a radar waveform obtained through a trainable mapping $\mathbf{y}_{\boldsymbol{\theta}_T}=f_{\boldsymbol{\theta}_T}(\mathbf{s}) \in \mathbb{C}^K$.
The environment is modeled as a stochastic
system that produces the vector $\mathbf{z}\in \mathbb{C}^{K}$
from a conditional PDF $p(\mathbf{z}|\mathbf{y}_{\boldsymbol{\theta}_T}, \mathcal{H}_i)$
parameterized by a binary variable $i\in \{0,1\}$. The absence or presence of a target is indicated by the values $i=0$ and $i=1$ respectively, and hence $i$ is referred to as the \emph{target state indicator}. The receiver passes the received
vector $\mathbf{z}$ through a trainable mapping $p=f_{\boldsymbol{\theta}_R}(\mathbf{z})$, which produces the scalar $p\in (0,1)$. The final decision $\hat{i}\in \{0,1\}$ is made by comparing the output
of the receiver $p$ to a hard threshold in the interval $(0,1)$.

\subsection{Transmitter and Receiver Architectures}
As discussed in Section II-A, the transmitter and the receiver are implemented as two separate parametric functions $f_{\boldsymbol{\theta}_T}(\cdot)$ and $f_{\boldsymbol{\theta}_R}(\cdot)$. We now detail an implementation of the transmitter $f_{\boldsymbol{\theta}_T}(\cdot)$ and receiver $f_{\boldsymbol{\theta}_R}(\cdot )$ based on feedforward neural networks. 

A feedforward neural network is a parametric
function $\tilde{f}_{\boldsymbol{\theta}}(\cdot )$ that maps an input real-valued
vector $\mathbf{u}_{\text{in}}\in \mathbb{R}^{N_{\text{in}}}$ to an output real-valued
vector $\mathbf{u}_{\text{out}}\in \mathbb{R}^{N_{\text{out}}}$ via $L$ successive layers,
where $N_{\text{in}}$ and $N_{\text{out}}$ represent, respectively, the number of neurons of
the input and output layers. Noting that the input to the $l$th layer is the output of the $(l-1)$th layer, the output of the $l$th layer is given by
\begin{equation}
\mathbf{u}_{l}=\tilde{f}_{\boldsymbol{\theta}^{[l]}}(\mathbf{u}_{l-1})=\phi \big(%
\mathbf{W}^{[l]}\mathbf{u}_{l-1}+\mathbf{b}^{[l]}\big),\text{ }\text{for}%
\text{ }l=1,\dots ,L,
\end{equation}%
where $\phi (\cdot )$ is an element-wise activation function, and $\boldsymbol{\theta}^{[l]}=\{\mathbf{W}^{[l]},\mathbf{b}^{[l]}\}$ contains the trainable parameter of the $l$th layer comprising the weight $\mathbf{W}^{[l]}$ and bias $%
\mathbf{b}^{[l]}$. The vector of trainable parameters of the entire neural network comprises the parameters of all layers, i.e., $\boldsymbol{\theta }=\text{vec}\{\boldsymbol{\theta}^{[1]},\cdots,\boldsymbol{\theta}^{[L]}\}$.

The architecture of the end-to-end radar system with transmitter and receiver implemented based on feedforward neural networks is shown in Fig. \ref{f: arch}. The transmitter applies a complex initialization waveform $\mathbf{s}$ to
the function $f_{\boldsymbol{\theta }_{T}}(\cdot)$. The complex-value input $\mathbf{s}$ is processed by a complex-to-real conversion layer. This is followed by
a real-valued neural network $\tilde{f}_{\boldsymbol{\theta}_T}(\cdot)$. The output of the neural network is converted back to complex-values, and an output layer normalizes the transmitted power. As a result, the transmitter generates the radar waveform $\mathbf{y}_{\boldsymbol{\theta}_T}$.

The receiver applies the received signal $\mathbf{z}$ to the
function $f_{\boldsymbol{\theta }_{R}}(\cdot )$. Similar to the transmitter, a first layer converts complex-valued to real-valued vectors. The neural network at the receiver is denoted $\tilde{f}_{\boldsymbol{\theta}_R}(\cdot)$. The task of the
receiver is to generate a scalar $p\in (0,1)$ that approximates the
posterior probability of the presence of a target conditioned on the
received vector $\mathbf{z}$. To this end, the last layer of the neural network $\tilde{f}_{\boldsymbol{\theta}_R}(\cdot)$ is selected as a logistic regression layer consisting of operating over a linear combination of outputs from the previous layer. The presence or absence of the target is determined based on the output of the receiver and a threshold set according to a false alarm constraint.   

\begin{figure}
	\vspace{-5ex} \hspace{17ex} \includegraphics[width=1.2\linewidth]{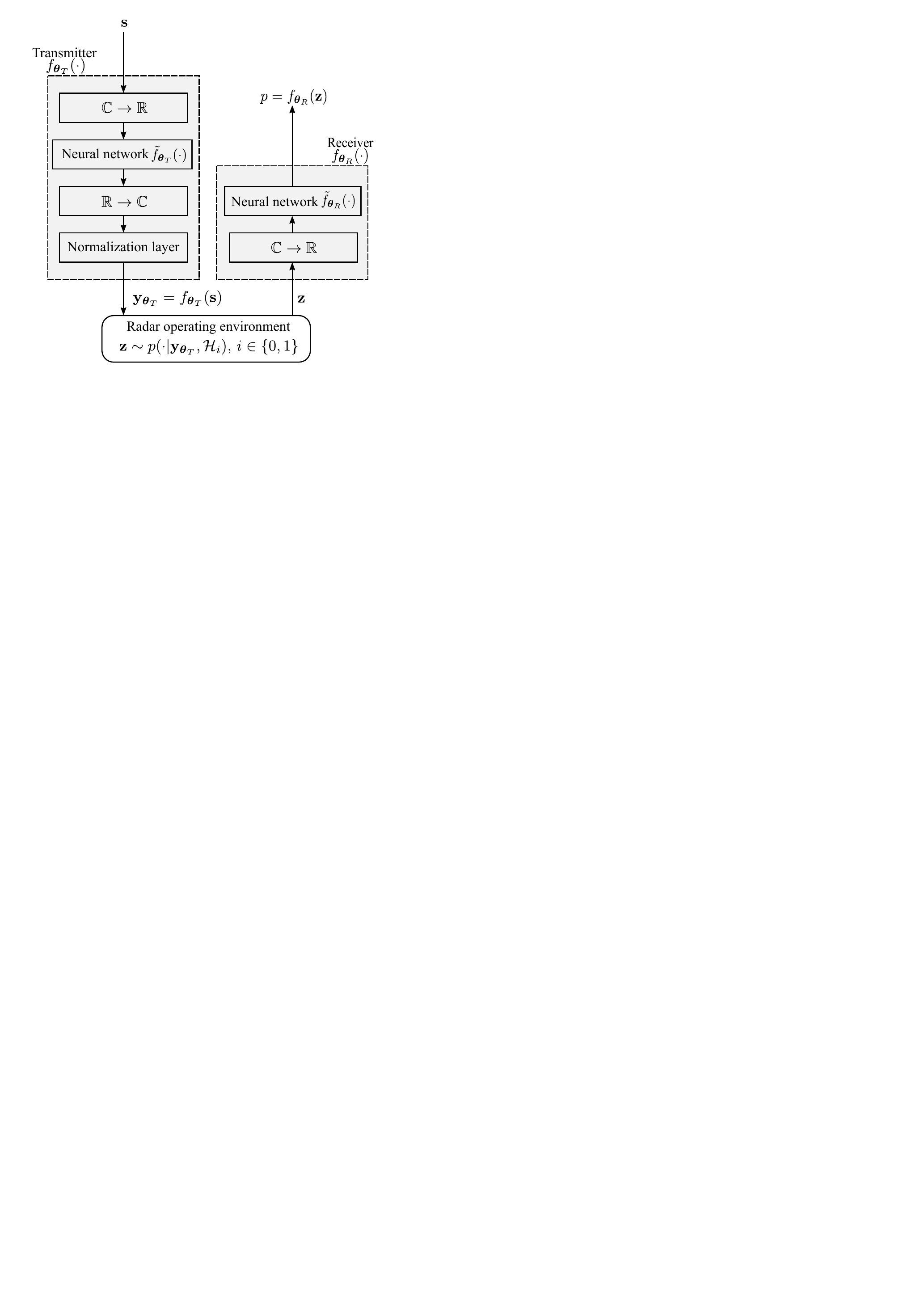}
	\vspace{-109ex}
	\caption{Transmitter and receiver architectures based on feedforward neural networks.}
	\label{f: arch}
\end{figure}

\section{Training of End-to-End Radar Systems}
This section discusses the joint optimization of the trainable parameter vectors $\boldsymbol{\theta }_{T}$ and $\boldsymbol{\theta }_{R}$ to meet application-specific performance requirements. Two training algorithms are proposed to train the end-to-end radar system. The first algorithm alternates between training of the receiver and of the transmitter. This algorithm is referred to as \emph{alternating training}, and is inspired by the approach used in \cite{Aoudia 2019} to train encoder and decoder of a digital communication system.
In contrast, the second algorithm trains the receiver and transmitter simultaneously. This approach is referred to as \emph{simultaneous training}. Note that the proposed two training algorithms are applicable to other differentiable parametric functions implementing the transmitter $f_{\boldsymbol{\theta }_{T}}(\cdot )$ and the receiver $f_{\boldsymbol{\theta }_{R}}(\cdot )$, such as recurrent neural network or its variants \cite{deeplearning}. In the following, we first discuss alternating training and then we detail simultaneous training.

\subsection{Alternating Training: Receiver Design}
Alternating training consists of iterations encompassing separate receiver and transmitter updates. In this subsection, we focus on the receiver updates. A receiver training update optimizes the receiver parameter vector $\boldsymbol{\theta }_{R}$ for a fixed transmitter waveform $\mathbf{y}_{\boldsymbol{\theta}_T}$. Receiver design is supervised in the sense that we assume the target state indicator $i$ to be available to the receiver during training.
Supervised training of the receiver for a fixed transmitter's parameter vector $\boldsymbol{\theta}_T$ is illustrated in Fig. \ref{f:rx_training}.
\begin{figure}[H]
	\vspace{-4ex} \hspace{16ex} \includegraphics[width=1.3\linewidth]{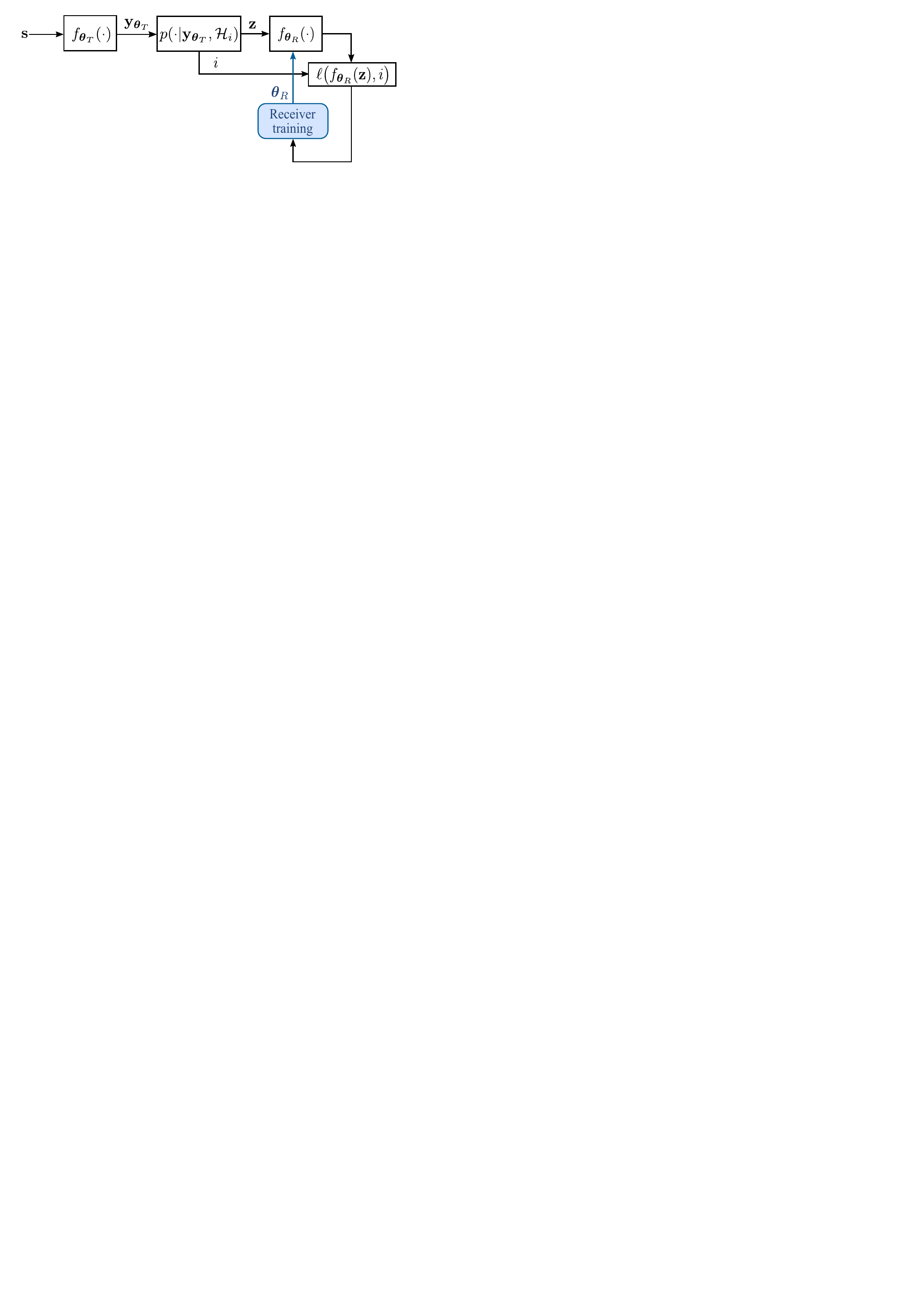}
	\vspace{-146ex}
	\caption{Supervised training of the receiver for a fixed transmitted waveform.}
	\label{f:rx_training}
\end{figure}

The standard cross-entropy loss \cite{Moya 2013} is adopted as the loss function for the receiver. For a given transmitted waveform $\mathbf{y}_{\boldsymbol{\theta}_T}=f_{\boldsymbol{\theta}_T}(\mathbf{s})$, the receiver average loss function is accordingly given by
\begin{equation}
	\begin{aligned}
		\mathcal{L}_R(\boldsymbol{\theta}_R)=&\sum_{i\in\{0,1\}}P(\mathcal{H}_i)\mathbb{E}_{\substack{ \mathbf{z}\sim p(\mathbf{z}|\mathbf{y}_{\boldsymbol{\theta}_T},\mathcal{H}_i)}}\big\{\ell \big( f_{\boldsymbol{\theta}_R}(\mathbf{z}),i\big)\big\}, 
	\end{aligned}\label{eq: rx loss}
\end{equation}
where $P(\mathcal{H}_i)$ is the prior probability of the target state indicator $i$, and $\ell\big( f_{\boldsymbol{\theta}_R}(\mathbf{z}),i\big)$ is the instantaneous cross-entropy loss for a pair $\big(f_{\boldsymbol{\theta}_R}(\mathbf{z}), i\big)$, namely,
\begin{equation}
	\ell\big( f_{\boldsymbol{\theta}_R}(\mathbf{z}),i\big)=-i\ln f_{\boldsymbol{\theta}_{R}}(\mathbf{z})-(1-i)\ln\big[1- f_{\boldsymbol{\theta}_{R}}(\mathbf{z})\big]. \label{eq: loss inst}
\end{equation}

For a fixed transmitted waveform, the receiver parameter vector $\boldsymbol{\theta}_R$ should be ideally optimized by minimizing  (\ref{eq: rx loss}), e.g., via gradient descent or one of its variants \cite{SGD}. 
The gradient of average loss (\ref{eq: rx loss}) with respect to the receiver parameter vector $\boldsymbol{\theta}_R$ is 
\begin{equation}
	{\nabla}_{\boldsymbol{\theta}_R}\mathcal{L}_R(\boldsymbol{\theta}_R)=\sum_{i\in\{0,1\}}P(\mathcal{H}_i)\mathbb{E}_{\substack{ \mathbf{z}\sim p(\mathbf{z}|\mathbf{y}_{\boldsymbol{\theta}_T},\mathcal{H}_i)}}\big\{ {\nabla}_{\boldsymbol{\theta}_R} \ell\big( f_{\boldsymbol{\theta}_R}(\mathbf{z}),i\big)\big\}.  \label{eq: rx loss grad.}
\end{equation} 
This being a data-driven approach, rather than assuming known prior probability of the target state indicator $P(\mathcal{H}_i)$ and likelihood $p(\mathbf{z}|\mathbf{y}_{\boldsymbol{\theta}_T},\mathcal{H}_i)$, the receiver is assumed to have access to $Q_R$ independent and identically distributed (i.i.d.) samples $\mathcal{D}_R=\big\{ \mathbf{z}^{(q)}\sim p(\mathbf{z}|\mathbf{y}_{\boldsymbol{\theta}_T},\mathcal{H}_{i^{(q)}}), {i^{(q)}}\in\{0,1\}  \big\}_{q=1}^{Q_R}$. 

Given the output of the receiver function $f_{\boldsymbol{\theta}_R}(\mathbf{z}^{(q)})$ for a received sample vector $\mathbf{z}^{(q)}$ and the indicator $i^{(q)}\in \{0,1 \}$, the instantaneous cross-entropy loss is computed from (\ref{eq: loss inst}), and the estimated receiver gradient is given by
\begin{equation}
	{\nabla}_{\boldsymbol{\theta}_R}\widehat{\mathcal{L}}_R(\boldsymbol{\theta}_R)=\frac{1}{Q_R}\sum_{q=1}^{Q_R} {\nabla}_{\boldsymbol{\theta}_R} \ell \big( f_{\boldsymbol{\theta}_R}(\mathbf{z}^{(q)}),{i^{(q)}} \big). \label{eq: est. rx loss grad}
\end{equation}
Using (\ref{eq: est. rx loss grad}), the receiver parameter vector $\boldsymbol{\theta}_R$ is adjusted according to stochastic gradient descent updates 
\begin{equation}
	\boldsymbol{\theta }_R^{(n+1)}=\boldsymbol{\theta}_R^{(n)}
	-\epsilon {\nabla}_{\boldsymbol{\theta}_R}\widehat{\mathcal{L}}_R(\boldsymbol{\theta}_R^{(n)})
	 \label{eq: rx sgd}
\end{equation}
across iterations $n=1,2,\cdots$, where $\epsilon >0$ is the learning rate.

\subsection{Alternating Training: Transmitter Design}
In the transmitter training phase of alternating training, the receiver parameter vector $\boldsymbol{\theta}_R$ is held constant, and the function $f_{\boldsymbol{\theta}_T}(\cdot)$ implementing the transmitter is optimized.
The goal of transmitter training is to find an optimized parameter vector $\boldsymbol{\theta}_T$ that minimizes the cross-entropy loss function (\ref{eq: rx loss}) seen as a function of $\boldsymbol{\theta}_T$. 

As illustrated in Fig. \ref{f:tx_training}, a stochastic transmitter outputs a waveform $\mathbf{a}$ drawn from a distribution $\pi(\mathbf{a}|\mathbf{y}_{\boldsymbol{\theta}_T})$ conditioned on $\mathbf{y}_{\boldsymbol{\theta}_T}=f_{\boldsymbol{\theta}_T}(\mathbf{s})$. The introduction of the randomization $\pi(\mathbf{a}|\mathbf{y}_{\boldsymbol{\theta}_T})$ of the designed waveform $\mathbf{y}_{\boldsymbol{\theta}_T}$ is useful to enable exploration of the design space in a manner akin to standard RL policies. To train the transmitter, we aim to minimize the average cross-entropy loss
\begin{equation}
	\begin{aligned}
		\mathcal{L}^{\pi}_T(\boldsymbol{\theta}_T)=&\sum_{i\in\{0,1\}}P(\mathcal{H}_i)\mathbb{E}_{\substack{ \mathbf{a}\sim \pi (\mathbf{a}|\mathbf{y}_{\boldsymbol{\theta}_T}) \\ \mathbf{z}\sim p(\mathbf{z}|\mathbf{a},\mathcal{H}_i)}}\big\{\ell\big( f_{\boldsymbol{\theta}_R}(\mathbf{z}),i \big)\big\}. \label{eq: tx loss RL no constraint}
	\end{aligned} 
\end{equation}
Note that this is consistent with (\ref{eq: rx loss}), with the caveat that an expectation is taken over policy $\pi(\mathbf{a}|\mathbf{y}_{\boldsymbol{\theta}_T})$. This is indicated by the superscript ``$\pi$''. 

\begin{figure}[H]
	\vspace{-4ex} \hspace{15ex} \includegraphics[width=1.3\linewidth]{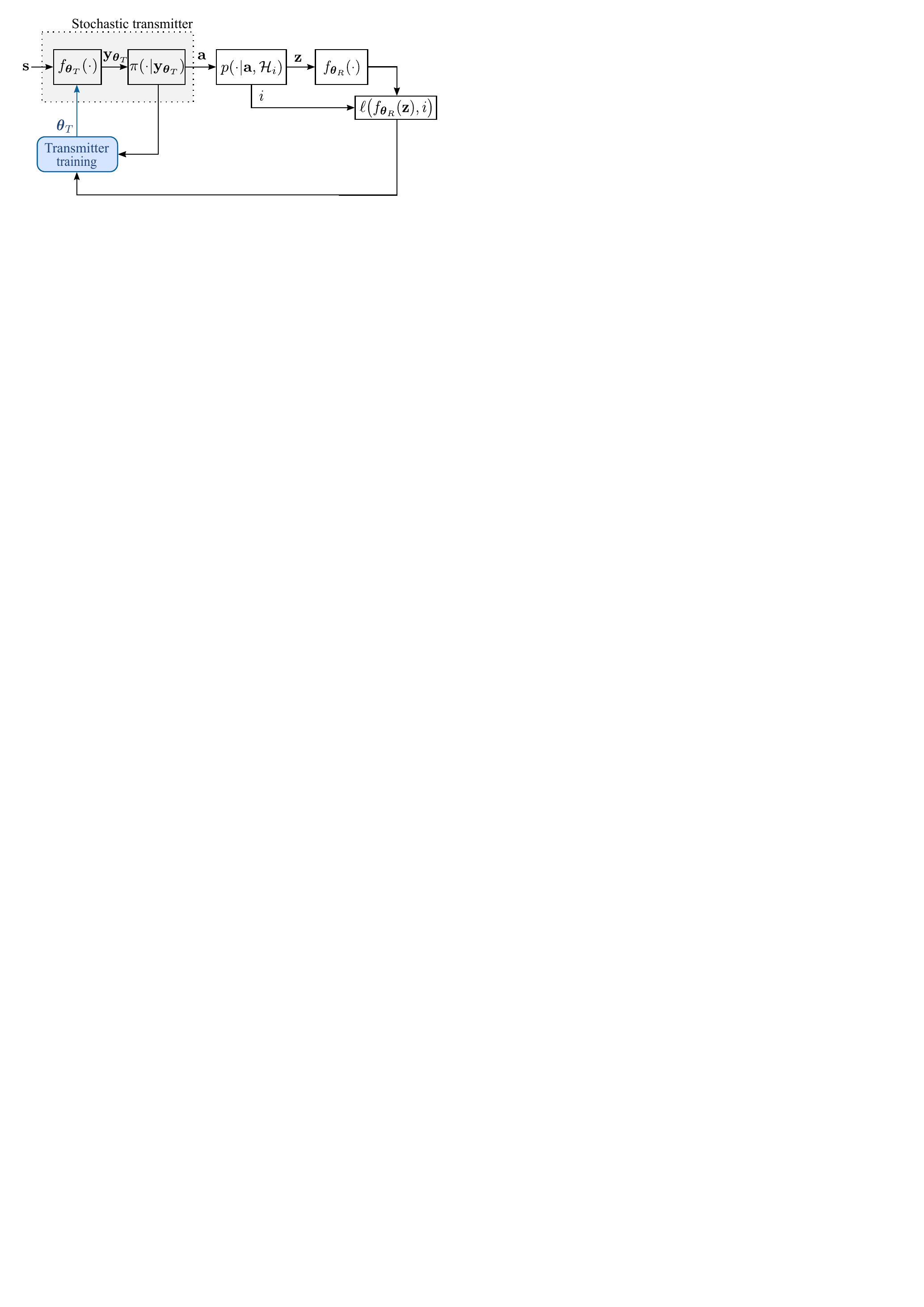}
	\vspace{-141ex}
	\caption{RL-based transmitter training for a fixed receiver design.}
	\label{f:tx_training}
\end{figure} 

Assume that the policy  $\pi(\mathbf{a}|\mathbf{y}_{\boldsymbol{\theta}_T})$ is differentiable with respect to the transmitter parameter vector $\boldsymbol{\theta}_T$, i.e., that the gradient $\nabla_{\boldsymbol{\theta }_T}\pi(\mathbf{a}|\mathbf{y}_{\boldsymbol{\theta}_T})$ exists.
The policy gradient theorem \cite{Sutton 2000} states that the gradient of the average loss (\ref{eq: tx loss RL no constraint}) can be written as
\begin{equation}
	\begin{aligned}
		\nabla_{\boldsymbol{\theta}_T}\mathcal{L}^{\pi}_T(\boldsymbol{\theta}_T)=&\sum_{i\in\{0,1\}}P(\mathcal{H}_i)\mathbb{E}_{\substack{ \mathbf{a}\sim \pi (\mathbf{a}|\mathbf{y}_{\boldsymbol{\theta}_T}) \\ \mathbf{z}\sim p(\mathbf{z}|\mathbf{a},\mathcal{H}_i)}}\big\{\ell \big( f_{\boldsymbol{\theta}_R}(\mathbf{z}), i\big)\nabla_{\boldsymbol{\theta}_T}\ln\pi(\mathbf{a}|\mathbf{y}_{\boldsymbol{\theta}_T})\big\}. \label{eq: tx loss RL grad} 
	\end{aligned}
\end{equation}
The gradient (\ref{eq: tx loss RL grad}) has the important advantage that it may be estimated via $Q_T$ i.i.d. samples  $\mathcal{D}_T=\big\{\mathbf{a}^{(q)}\sim \pi(\mathbf{a}|\mathbf{y}_{\boldsymbol{\theta}_T}), \mathbf{z}^{(q)}\sim p(\mathbf{z}|\mathbf{a}^{(q)},\mathcal{H}_{i^{(q)}}), i^{(q)}\in \{0,1\} \big\}_{q=1}^{Q_T}$, yielding the estimate
\begin{equation}
	{\nabla}_{\boldsymbol{\theta}_T}\widehat{\mathcal{L}}^{\pi}_T(\boldsymbol{\theta}_T)=\frac{1}{Q_T}\sum_{q=1}^{Q_T} \ell \big( f_{\boldsymbol{\theta}_R}(\mathbf{z}^{(q)}),i^{(q)}\big) \nabla_{\boldsymbol{\theta}_T}\ln\pi(\mathbf{a}^{(q)}|\mathbf{y}_{\boldsymbol{\theta}_T}). \label{eq: tx loss RL grad est} 
\end{equation}

With estimate (\ref{eq: tx loss RL grad est}), in a manner similar to (\ref{eq: rx sgd}), the transmitter parameter vector $\boldsymbol{\theta}_T$ may be optimized iteratively according to the stochastic gradient descent update rule
\begin{equation}
	\begin{aligned}
		&\boldsymbol{\theta }_T^{(n+1)}=\boldsymbol{\theta}_T^{(n)}
		-\epsilon {\nabla}_{\boldsymbol{\theta}_T}\widehat{\mathcal{L}}^{\pi}_T(\boldsymbol{\theta}_T^{(n)})
	\end{aligned} \label{eq: tx sgd}
\end{equation}
over iterations $n=1,2,\cdots$.
The alternating training algorithm is summarized as Algorithm 1.  The training process is carried out until a stopping criterion is satisfied. For example, a prescribed number of iterations may have been reached, or a number of iterations may have elapsed during which the training loss (\ref{eq: tx loss RL no constraint}), estimated using samples $\mathcal{D}_T$, may have not decreased by more than a given amount.

\DontPrintSemicolon%
\begin{algorithm}[]
	\SetAlgoLined
	\KwIn{initialization waveform $\mathbf{s}$; stochastic policy $\pi_{\boldsymbol{\theta}_T}(\cdot|\mathbf{y})$; learning rate $\epsilon$}
	\KwOut{learned parameter vectors $\boldsymbol{\theta}_R$ and $\boldsymbol{\theta}_T$}
	initialize $\boldsymbol{\theta}_R^{(0)}$ and $\boldsymbol{\theta}_T^{(0)}$, and set $n=0$\;
	\While{stopping criterion not satisfied}{
		\tcc{receiver training phase}
		evaluate the receiver loss gradient ${\nabla}_{\boldsymbol{\theta}_R}\widehat{\mathcal{L}}_R(\boldsymbol{\theta}_R^{(n)})$ from (\ref{eq: est. rx loss grad}) with $\boldsymbol{\theta}_T=\boldsymbol{\theta}_T^{(n)}$\;
		update receiver parameter vector $\boldsymbol{\theta}_R$ via 
		\begin{equation*}
			\boldsymbol{\theta }_R^{(n+1)}=\boldsymbol{\theta}_R^{(n)}
			-\epsilon {\nabla}_{\boldsymbol{\theta}_R}\widehat{\mathcal{L}}_R(\boldsymbol{\theta}_R^{(n)})
		\end{equation*}
		and stochastic transmitter policy turned off\;
		\tcc{transmitter training phase}
		evaluate the transmitter loss gradient ${\nabla}_{\boldsymbol{\theta}_T}\widehat{\mathcal{L}}^{\pi}_{T}(\boldsymbol{\theta}_T^{(n)})$ from (\ref{eq: tx loss RL grad est}) with $\boldsymbol{\theta}_R=\boldsymbol{\theta}_R^{(n+1)}$\;
		update transmitter parameter vector $\boldsymbol{\theta}_T$ via
		\begin{equation*}
			\boldsymbol{\theta }_T^{(n+1)}=\boldsymbol{\theta}_T^{(n)}
			-\epsilon {\nabla}_{\boldsymbol{\theta}_T}\widehat{\mathcal{L}}^{\pi}_{T}(\boldsymbol{\theta}_T^{(n)})
		\end{equation*}\;
		$n\leftarrow n+1$ }
	
	\caption{Alternating Training}
\end{algorithm}

\subsection{Transmitter Design with Constraints}
We extend the transmitter training discussed in the previous section to incorporate waveform constraints on PAR and spectral compatibility. To this end, we introduce penalty functions that are used to modify the training criterion (\ref{eq: tx loss RL no constraint}) to meet these constraints.
\subsubsection{PAR Constraint} Low PAR waveforms are preferred in radar systems due to hardware limitations related to waveform generation.
A lower PAR entails a lower dynamic range of the power amplifier, which in turn allows an increase in average transmitted power. The PAR of a radar waveform ${\mathbf{y}}_{\boldsymbol{\theta}_T}=f_{\boldsymbol{\theta}_T}(\mathbf{s})$ may be expressed
\begin{equation}
J_{\text{PAR}}(\boldsymbol{\theta}_T)=\frac{\underset{k=1,\cdots ,K}{\max }|%
	{y}_{_{\boldsymbol{\theta}_T}, k}|^{2}}{||{\mathbf{y}}_{\boldsymbol{\theta}_T}||^{2}/K},
\label{eq: PAPR complex}
\end{equation}%
which is bounded according to $1\leq J_{\text{PAR}}(\boldsymbol{\theta}_T)\leq K$.

\subsubsection{Spectral Compatibility Constraint}
A spectral constraint is imposed when a radar system is required to operate
over a spectrum partially shared with other systems such as wireless communication networks. Suppose there
are $D$ frequency bands $\{\Gamma _{d}\}_{d=1}^{D}$ shared by the radar and
by the coexisting systems, where $\Gamma _{d}=[f_{d,l},f_{d,u}]$, with $%
f_{d,l}$ and $f_{d,u}$ denoting the lower and upper normalized frequencies of the $d$th band, respectively. The amount of interfering energy generated by the radar waveform ${\mathbf{y}}_{\boldsymbol{\theta}_T}$ in the $d$th shared band is  
\begin{equation}
\int_{f_{d,l}}^{f_{d,u}}\left\vert \sum_{k=0}^{K-1}{y}%
_{_{\boldsymbol{\theta}_T}, k}e^{-j2\pi fk}\right\vert^{2}df={\mathbf{y}^{H}_{\boldsymbol{\theta}_T}}{\boldsymbol{%
		\Omega }}_{d}{\mathbf{y}_{\boldsymbol{\theta}_T}},  \label{eq: waveform energy}
\end{equation}%
where 
\begin{equation}
\begin{aligned} 
\big[ {\boldsymbol{\Omega}}_d \big]_{v,h}
& =\left\{ \begin{aligned}
&f_{d,u}-f_{d,l} \qquad\qquad\qquad\qquad\text{if }  v=h\\ &\frac{e^{j2\pi
		f_{d,u}(v-h)}-e^{j2\pi f_{d,l}(v-h)}}{j2\pi (v-h)} \quad \text{ if } v\neq h
\end{aligned} \right. \end{aligned}
\end{equation}
for $(v,h)\in \{1,\cdots,K\}^2$. Let ${\boldsymbol{\Omega }}=\sum_{d=1}^{D}\omega_{d}{\boldsymbol{\Omega 
}}_{d}$ be a weighted interference covariance matrix, where the weights $\{\omega _{d}\}_{d=1}^{D}$ are assigned
based on practical considerations regarding the impact of interference in the $D$ bands. These include distance between the radar
transmitter and interferenced systems, and tactical importance of the
coexisting systems \cite{Aubry2015}. Given a radar waveform $\mathbf{y}_{\boldsymbol{\theta}_T}=f_{\boldsymbol{\theta}_T}(\mathbf{s})$, we define the spectral compatibility penalty function as
\begin{equation}
	J_{\text{spectrum}}(\boldsymbol{\theta}_T)={\mathbf{y}}^{H}_{\boldsymbol{\theta}_T}{\boldsymbol{\Omega }}{\mathbf{y}_{\boldsymbol{\theta}_T}}, \label{eq: spectrum complex}
\end{equation}
which is the total interfering energy from the radar waveform produced on the shared frequency bands.

\subsubsection{Constrained Transmitter Design}
For a fixed receiver parameter vector $\boldsymbol{\theta}_R$, the average loss (\ref{eq: tx loss RL no constraint}) is modified by introducing a penalty function $J\in\{ J_{\text{PAR}}, J_{\text{spectrum}}\}$. Accordingly, we formulate the transmitter loss function, encompassing (\ref{eq: tx loss RL no constraint}), (\ref{eq: PAPR complex}) and (\ref{eq: spectrum complex}), as
\begin{equation}
	\begin{aligned}
		\mathcal{L}^{\pi}_{T,c}(\boldsymbol{\theta}_T)&=\mathcal{L}^{\pi}_T(\boldsymbol{\theta}_T)+\lambda J(\boldsymbol{\theta}_T)\\
		&=\sum_{i\in\{0,1\}}P(\mathcal{H}_i)\mathbb{E}_{\substack{ \mathbf{a}\sim \pi (\mathbf{a}|\mathbf{y}_{\boldsymbol{\theta}_T}) \\ \mathbf{z}\sim p(\mathbf{z}|\mathbf{a},\mathcal{H}_i)}}\big\{\ell \big( f_{\boldsymbol{\theta}_R}(\mathbf{z}),i \big)\big\}+\lambda J(\boldsymbol{\theta}_T). \label{eq: tx loss RL}
	\end{aligned} 
\end{equation}
where $\lambda$ controls the weight of the penalty $J(\boldsymbol{\theta}_T)$, and is referred to as the \emph{penalty parameter}. When the penalty parameter $\lambda$ is small, the transmitter is trained to improve its ability to adapt to the environment, while placing less emphasis on reducing the PAR level or interference energy from the radar waveform; and vice versa for large values of $\lambda$. Note that the waveform penalty function $J(\boldsymbol{\theta}_T)$ depends only on the transmitter trainable parameters $\boldsymbol{\theta}_T$. Thus, imposing the waveform constraint does not affect the receiver training.

It is straightforward to write the estimated version of the gradient (\ref{eq: tx loss RL}) with respect to $\boldsymbol{\theta}_T$ by introducing the penalty as
\begin{equation}
	\nabla_{\boldsymbol{\theta}_T}\widehat{\mathcal{L}}^{\pi}_{T,c}(\boldsymbol{\theta}_T)=\nabla_{\boldsymbol{\theta}_T}\widehat{\mathcal{L}}^{\pi}_T(\boldsymbol{\theta}_T)+\lambda \nabla_{\boldsymbol{\theta}_T}J(\boldsymbol{\theta}_T), \label{eq: est tx loss grad constraint}
\end{equation}
where the gradient of the penalty function $\nabla_{\boldsymbol{\theta}_T}J(\boldsymbol{\theta}_T)$ is provided in Appendix A.

Substituting (\ref{eq: tx loss RL grad est}) into (\ref{eq: est tx loss grad constraint}), we finally have the estimated gradient
\begin{equation}
	\nabla_{\boldsymbol{\theta}_T} \widehat{\mathcal{L}}^{\pi}_{T,c}(\boldsymbol{\theta}_T)=\frac{1}{Q_T}\sum_{q=1}^{Q_T} \ell \big( f_{\boldsymbol{\theta}_R}(\mathbf{z}^{(q)}),i^{(q)}\big) \nabla_{\boldsymbol{\theta}_T}\ln\pi(\mathbf{a}^{(q)}|\mathbf{y}_{\boldsymbol{\theta}_T})+\lambda \nabla_{\boldsymbol{\theta}_T} J(\boldsymbol{\theta}_T), \label{eq: est tx loss grad constraint2}
\end{equation}
which is used in the stochastic gradient update rule
\begin{equation}
	\begin{aligned}
		&\boldsymbol{\theta }_T^{(n+1)}=\boldsymbol{\theta}_T^{(n)}
		-\epsilon {\nabla}_{\boldsymbol{\theta}_T}\widehat{\mathcal{L}}^{\pi}_{T,c}(\boldsymbol{\theta}_T^{(n)}) 	\quad \text{for }n=1,2,\cdots.
	\end{aligned} \label{eq: tx constraint_sgd}
\end{equation}

\subsection{Simultaneous Training}
This subsection discusses simultaneous training, in which the receiver and transmitter are updated simultaneously as illustrated in Fig. \ref{f:joint_training}. To this end, the objective function is the average loss 
\begin{equation}
	\begin{aligned}
		\mathcal{L}^{\pi}(\boldsymbol{\theta}_R, \boldsymbol{\theta}_T)=&\sum_{i\in\{0,1\}}P(\mathcal{H}_i)\mathbb{E}_{\substack{ \mathbf{a}\sim \pi (\mathbf{a}|\mathbf{y}_{\boldsymbol{\theta}_T}) \\ \mathbf{z}\sim p(\mathbf{z}|\mathbf{a},\mathcal{H}_i)}}\big\{\ell \big( f_{\boldsymbol{\theta}_R}(\mathbf{z}),i \big)\big\}. \label{eq: joint loss} 
	\end{aligned}
\end{equation}
This function is minimized over both parameters $\boldsymbol{\theta}_R$ and $\boldsymbol{\theta}_T$ via stochastic gradient descent. 
\begin{figure}[H]
	\vspace{-3ex} \hspace{16ex} \includegraphics[width=1.2\linewidth]{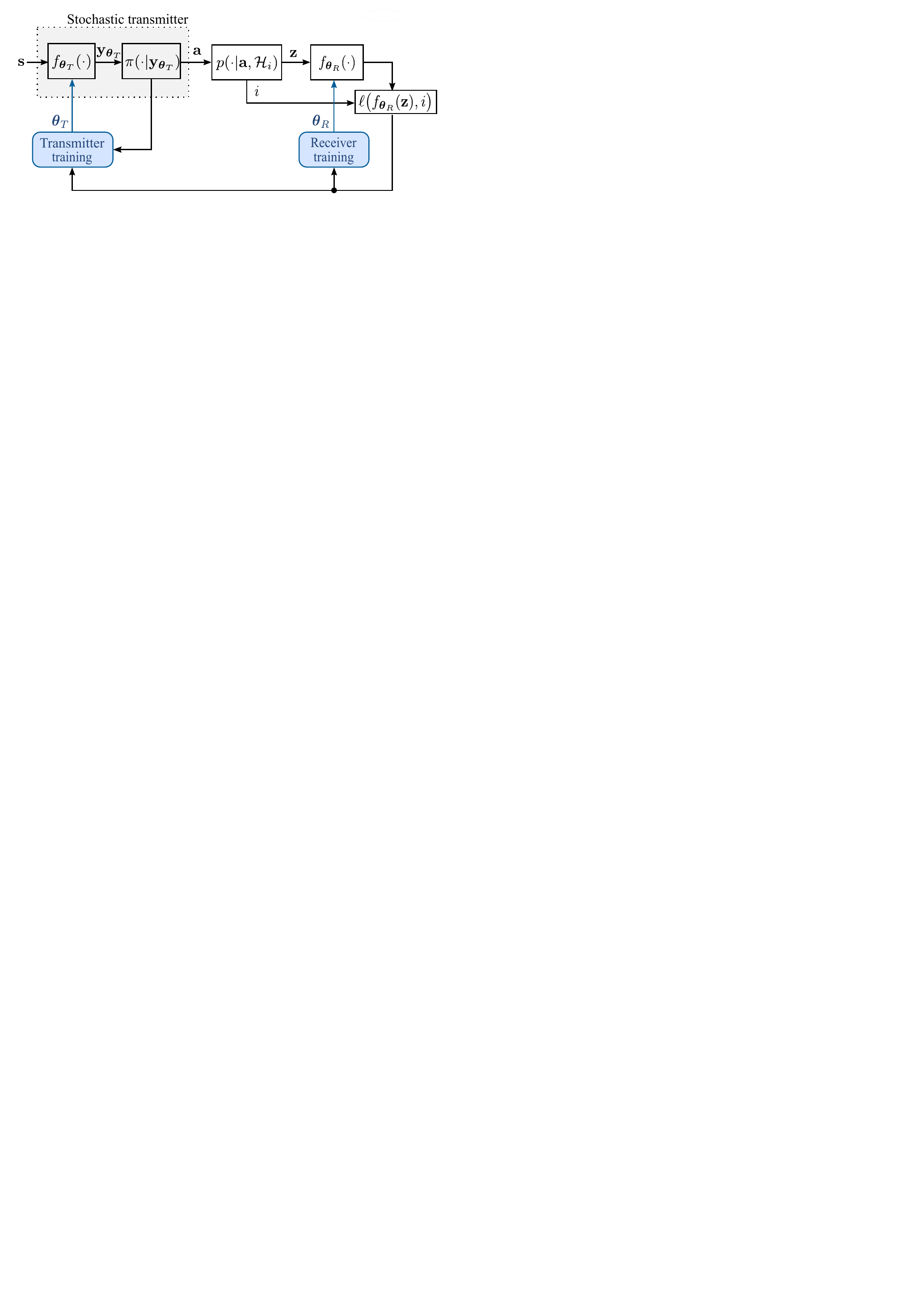}
	\vspace{-131ex}
	\caption{Simultaneous training of the end-to-end radar system. The receiver is trained by supervised learning, while the transmitter is trained by RL. 
	}
	\label{f:joint_training}
\end{figure}

The gradient of (\ref{eq: joint loss}) with respect to $\boldsymbol{\theta}_R$ is
\begin{equation}
	\nabla_{\boldsymbol{\theta}_R}\mathcal{L}^{\pi}(\boldsymbol{\theta}_R, \boldsymbol{\theta}_T)=\sum_{i\in\{0,1\}}P(\mathcal{H}_i)\mathbb{E}_{\substack{ \mathbf{a}\sim \pi (\mathbf{a}|\mathbf{y}_{\boldsymbol{\theta}_T}) \\ \mathbf{z}\sim p(\mathbf{z}|\mathbf{a},\mathcal{H}_i)}}\big\{\nabla_{\boldsymbol{\theta}_R}\ell\big( f_{\boldsymbol{\theta}_R}(\mathbf{z}),i \big)\big\}, \label{eq: rx loss grad joint} 
\end{equation}
and the gradient of (\ref{eq: joint loss}) with respect to $\boldsymbol{\theta}_T$ is
\begin{equation}
	\begin{aligned}
		\nabla_{\boldsymbol{\theta}_T}\mathcal{L}^{\pi}(\boldsymbol{\theta}_R, \boldsymbol{\theta}_T)=&\sum_{i\in\{0,1\}}P(\mathcal{H}_i)\nabla_{\boldsymbol{\theta}_T} \mathbb{E}_{\substack{ \mathbf{a}\sim \pi (\mathbf{a}|\mathbf{y}_{\boldsymbol{\theta}_T}) \\ \mathbf{z}\sim p(\mathbf{z}|\mathbf{a},\mathcal{H}_i)}}\big\{ \ell \big( f_{\boldsymbol{\theta}_R}(\mathbf{z}), i\big)\big\}\\
		=&\sum_{i\in\{0,1\}}P(\mathcal{H}_i)\mathbb{E}_{\substack{ \mathbf{a}\sim \pi (\mathbf{a}|\mathbf{y}_{\boldsymbol{\theta}_T}) \\ \mathbf{z}\sim p(\mathbf{z}|\mathbf{a},\mathcal{H}_i)}}\big\{ \ell \big( f_{\boldsymbol{\theta}_R}(\mathbf{z}), i\big)\nabla_{\boldsymbol{\theta}_T}\ln\pi(\mathbf{a}|\mathbf{y}_{\boldsymbol{\theta}_T})\big\}. \label{eq: tx loss RL grad joint} 
	\end{aligned}
\end{equation}

To estimate gradients (\ref{eq: rx loss grad joint}) and (\ref{eq: tx loss RL grad joint}), we assume access to $Q$ i.i.d. samples $\mathcal{D}=\big\{\mathbf{a}^{(q)}\sim \pi(\mathbf{a}|\mathbf{y}_{\boldsymbol{\theta}_T}), \mathbf{z}^{(q)}\sim p(\mathbf{z}|\mathbf{a}^{(q)},\mathcal{H}_{i^{(q)}}), i^{(q)}\in \{0,1\} \big\}_{q=1}^{Q}$. From (\ref{eq: rx loss grad joint}), the estimated receiver gradient is
\begin{equation}
	\nabla_{\boldsymbol{\theta}_R}\widehat{\mathcal{L}}^{\pi}(\boldsymbol{\theta}_R, \boldsymbol{\theta}_T)=\frac{1}{Q}\sum_{q=1}^{Q}\nabla_{\boldsymbol{\theta}_R}\ell\big( f_{\boldsymbol{\theta}_R}(\mathbf{z}^{(q)}),i ^{(q)}\big). \label{eq: rx loss grad joint est}
\end{equation}
Note that, in (\ref{eq: rx loss grad joint est}), the received vector $\mathbf{z}^{(q)}$ is obtained based on a given waveform $\mathbf{a}^{(q)}$ sampled from policy $\pi(\mathbf{a}|\mathbf{y}_{\boldsymbol{\theta}_T})$. Thus, the estimated receiver gradient (\ref{eq: rx loss grad joint est}) is averaged over the stochastic waveforms $\mathbf{a}$. This is in contrast to alternating training, in which the receiver gradient depends directly on the transmitted waveform $\mathbf{y}_{\boldsymbol{\theta}_T}$.

From (\ref{eq: tx loss RL grad joint}), the estimated transmitter gradient is given by
\begin{equation}
	\nabla_{\boldsymbol{\theta}_T}\widehat{\mathcal{L}}^{\pi}(\boldsymbol{\theta}_R, \boldsymbol{\theta}_T)=\frac{1}{Q}\sum_{q=1}^{Q} \ell \big( f_{\boldsymbol{\theta}_R}(\mathbf{z}^{(q)}),i^{(q)}\big) \nabla_{\boldsymbol{\theta}_T}\ln\pi(\mathbf{a}^{(q)}|\mathbf{y}_{\boldsymbol{\theta}_T}). \label{eq: tx loss RL grad joint est}
\end{equation}
Finally, denote the parameter set $\boldsymbol{\theta}=\{\boldsymbol{\theta }_{R}, \boldsymbol{\theta }_{T} \}$, from (\ref{eq: rx loss grad joint est}) and (\ref{eq: tx loss RL grad joint est}), the trainable parameter set $\boldsymbol{\theta}$ is updated according to the stochastic gradient descent rule 
\begin{equation}
	\boldsymbol{\theta}^{(n+1)}=\boldsymbol{\theta}^{(n)}
	-\epsilon {\nabla }_{\boldsymbol{\theta }}\widehat{\mathcal{L}}^{\pi}(\boldsymbol{\theta}_R^{(n)}, \boldsymbol{\theta}_T^{(n)})  \label{eq: sgd}
\end{equation} 
across iterations $n=1,2,\cdots.$

The simultaneous training algorithm is summarized in Algorithm 2. Like alternating training, simultaneous training can be directly extended to incorporate prescribed waveform constraints by adding the penalty term $\lambda J(\boldsymbol{\theta}_T)$ to the average loss (\ref{eq: joint loss}).
\DontPrintSemicolon%
\begin{algorithm}
	\SetAlgoLined
	\KwIn{initialization waveform $\mathbf{s}$; stochastic policy $\pi(\cdot|\mathbf{y}_{\boldsymbol{\theta}_T})$; learning rate $\epsilon$}
	\KwOut{learned parameter vectors $\boldsymbol{\theta}_R$ and $\boldsymbol{\theta}_T$}
	initialize $\boldsymbol{\theta}_R^{(0)}$ and $\boldsymbol{\theta}_T^{(0)}$, and set $n=0$\;
	\While{stopping criterion not satisfied}{
		evaluate the receiver gradient $\nabla_{\boldsymbol{\theta}_R}\widehat{\mathcal{L}}^{\pi}(\boldsymbol{\theta}_R^{(n)}, \boldsymbol{\theta}_T^{(n)})$ and the transmitter gradient $\nabla_{\boldsymbol{\theta}_T}\widehat{\mathcal{L}}^{\pi}(\boldsymbol{\theta}_R^{(n)}, \boldsymbol{\theta}_T^{(n)})$ from (\ref{eq: rx loss grad joint est}) and (\ref{eq: tx loss RL grad joint est}), respectively \;
		
		update receiver parameter vector $\boldsymbol{\theta}_R$ and transmitter parameter vector $\boldsymbol{\theta}_T$ simultaneously via
		\begin{equation*}
			\boldsymbol{\theta}_R^{(n+1)}=\boldsymbol{\theta}_R^{(n)}
			-\epsilon {\nabla }_{\boldsymbol{\theta }_R}\widehat{\mathcal{L}}^{\pi}(\boldsymbol{\theta}_R^{(n)}, \boldsymbol{\theta}_T^{(n)})
		\end{equation*}
	 	and
	 	\begin{equation*}
	 		\boldsymbol{\theta}_T^{(n+1)}=\boldsymbol{\theta}_T^{(n)}
	 		-\epsilon {\nabla }_{\boldsymbol{\theta }_T}\widehat{\mathcal{L}}^{\pi}(\boldsymbol{\theta}_R^{(n)}, \boldsymbol{\theta}_T^{(n)})
	 	\end{equation*}\;
		$n\leftarrow n+1$ }
	\caption{Simultaneous Training}
\end{algorithm}

\section{Theoretical properties of the gradients}

In this section, we discuss two useful theoretical properties of the gradients used for learning receiver and transmitter.
\subsection{Receiver Gradient}
As discussed previously, end-to-end learning of transmitted waveform and detector may be accomplished either by alternating or simultaneous training.
The main difference between alternating and simultaneous training concerns the update of the receiver trainable parameter vector $\boldsymbol{\theta}_R$. Alternating training of $\boldsymbol{\theta}_R$ relies on a fixed waveform  $\mathbf{y}_{\boldsymbol{\theta}_T}$ (see Fig. \ref{f:rx_training}), while simultaneous training relies on random waveforms $\mathbf{a}$ generated in accordance with a preset policy, i.e., $\mathbf{a} \sim \pi(\mathbf{a}|\mathbf{y}_{\boldsymbol{\theta}_T})$, as shown in Fig. \ref{f:joint_training}. The relation between the gradient applied by alternating training, $\nabla_{\boldsymbol{\theta }_R}{\mathcal{L}}_R(\boldsymbol{\theta}_R)$, and the gradient of simultaneous training, $\nabla_{\boldsymbol{\theta}_R}L^{\pi}(\boldsymbol{\theta}_R, \boldsymbol{\theta}_T)$,  with respect to $\boldsymbol{\theta}_R$ is stated by the following proposition.
\begin{proposition} 
	For the loss function (\ref{eq: rx loss}) computed based on a waveform $\mathbf{y}_{\boldsymbol{\theta}_T}$ and loss function (\ref{eq: tx loss RL no constraint}) computed based on a stochastic policy $\pi(\mathbf{a}|\mathbf{y}_{\boldsymbol{\theta}_T})$ continuous in $\mathbf{a}$, the following equality holds:
	\begin{equation}
		\begin{aligned}
			\nabla_{\boldsymbol{\theta }_R}{\mathcal{L}}_R(\boldsymbol{\theta}_R)=\nabla_{\boldsymbol{\theta}_R}\mathcal{L}^{\pi}(\boldsymbol{\theta}_R, \boldsymbol{\theta}_T).
		\end{aligned}
		\label{eq: Rx joint grad}
	\end{equation}		
\end{proposition}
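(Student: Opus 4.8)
The plan is to reduce both gradients to one and the same expectation of $\nabla_{\boldsymbol{\theta}_R}\ell\big(f_{\boldsymbol{\theta}_R}(\mathbf{z}),i\big)$, the only difference between the two training schemes being the law governing $\mathbf{z}$, and then to show that these two laws coincide once the policy randomization is marginalized out. I would start from the simultaneous‑training side. Because neither the policy $\pi(\mathbf{a}|\mathbf{y}_{\boldsymbol{\theta}_T})$ nor the channel $p(\mathbf{z}|\mathbf{a},\mathcal{H}_i)$ depends on $\boldsymbol{\theta}_R$, differentiating the loss (\ref{eq: joint loss}) in $\boldsymbol{\theta}_R$ lets the gradient pass through the double expectation and act only on the instantaneous cross‑entropy loss (\ref{eq: loss inst}); this is exactly the expression (\ref{eq: rx loss grad joint}), and the interchange of $\nabla_{\boldsymbol{\theta}_R}$ with the expectation is licensed by the continuity of $\pi$ in $\mathbf{a}$ together with a standard domination/integrability bound on $\nabla_{\boldsymbol{\theta}_R}\ell$.

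Next I would rewrite the double expectation in (\ref{eq: rx loss grad joint}) as the iterated integral $\sum_{i}P(\mathcal{H}_i)\int\!\!\int \pi(\mathbf{a}|\mathbf{y}_{\boldsymbol{\theta}_T})\,p(\mathbf{z}|\mathbf{a},\mathcal{H}_i)\,\nabla_{\boldsymbol{\theta}_R}\ell\big(f_{\boldsymbol{\theta}_R}(\mathbf{z}),i\big)\,d\mathbf{z}\,d\mathbf{a}$ and observe that the factor $\nabla_{\boldsymbol{\theta}_R}\ell\big(f_{\boldsymbol{\theta}_R}(\mathbf{z}),i\big)$ is a function of $(\mathbf{z},i)$ alone, independent of $\mathbf{a}$. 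Using Fubini's theorem (again justified by the continuity‑in‑$\mathbf{a}$ assumption and integrability) to carry out the $\mathbf{a}$‑integration first, the inner integral collapses the policy and the channel into the effective conditional density $\int \pi(\mathbf{a}|\mathbf{y}_{\boldsymbol{\theta}_T})\,p(\mathbf{z}|\mathbf{a},\mathcal{H}_i)\,d\mathbf{a}$ of $\mathbf{z}$ given the designed waveform $\mathbf{y}_{\boldsymbol{\theta}_T}$ and hypothesis $\mathcal{H}_i$, which is precisely the channel law $p(\mathbf{z}|\mathbf{y}_{\boldsymbol{\theta}_T},\mathcal{H}_i)$ entering the alternating‑training loss (\ref{eq: rx loss}). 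Substituting back gives $\sum_{i}P(\mathcal{H}_i)\,\mathbb{E}_{\mathbf{z}\sim p(\mathbf{z}|\mathbf{y}_{\boldsymbol{\theta}_T},\mathcal{H}_i)}\big\{\nabla_{\boldsymbol{\theta}_R}\ell\big(f_{\boldsymbol{\theta}_R}(\mathbf{z}),i\big)\big\}$, which is exactly $\nabla_{\boldsymbol{\theta}_R}\mathcal{L}_R(\boldsymbol{\theta}_R)$ as written in (\ref{eq: rx loss grad.}); this closes the chain of equalities.

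I expect the delicate point to be the identification of the policy‑averaged density with $p(\mathbf{z}|\mathbf{y}_{\boldsymbol{\theta}_T},\mathcal{H}_i)$: one has to be explicit about the sense in which the deterministic designed waveform $\mathbf{y}_{\boldsymbol{\theta}_T}$ drives the channel in each of the two schemes, and it is precisely the hypothesis that $\pi(\mathbf{a}|\mathbf{y}_{\boldsymbol{\theta}_T})$ is continuous in $\mathbf{a}$ that makes the exchange of the order of integration — and of gradient with expectation — go through without extra regularity conditions. The remaining steps are routine bookkeeping: linearity of the expectation over the two hypotheses $i\in\{0,1\}$ with weights $P(\mathcal{H}_i)$, and the observation that $\nabla_{\boldsymbol{\theta}_R}$ never touches the transmitter parameters $\boldsymbol{\theta}_T$.
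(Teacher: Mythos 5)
Your proposal is correct and follows essentially the same route as the paper's proof: both arguments use Fubini's theorem to integrate out the policy variable $\mathbf{a}$, identify $\int_{\mathcal{A}}\pi(\mathbf{a}|\mathbf{y}_{\boldsymbol{\theta}_T})\,p(\mathbf{z}|\mathbf{a},\mathcal{H}_i)\,d\mathbf{a}$ with the channel law $p(\mathbf{z}|\mathbf{y}_{\boldsymbol{\theta}_T},\mathcal{H}_i)$ appearing in the alternating-training loss, and interchange $\nabla_{\boldsymbol{\theta}_R}$ with the expectation. The only cosmetic difference is the order of operations: you differentiate first and then marginalize over $\mathbf{a}$, whereas the paper marginalizes at the level of the losses themselves (effectively showing $\mathcal{L}^{\pi}(\boldsymbol{\theta}_R,\boldsymbol{\theta}_T)=\mathcal{L}_R(\boldsymbol{\theta}_R)$, with Fubini justified by boundedness of the sigmoid-output cross-entropy and continuity) and only then takes the gradient, which changes nothing of substance.
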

\begin{proof}
	See Appendix B.
\end{proof}

Proposition 1 states that the gradient of simultaneous training, $\nabla_{\boldsymbol{\theta}_R}\mathcal{L}^{\pi}(\boldsymbol{\theta}_R, \boldsymbol{\theta}_T)$, equals the gradient of alternating training, $\nabla_{\boldsymbol{\theta }_R}\mathcal{L}_R(\boldsymbol{\theta}_R)$, even though simultaneous training applies a random waveform $\mathbf{a}\sim \pi(\mathbf{a}|\mathbf{y}_{\boldsymbol{\theta}_T})$ to train the receiver. Note that this result applies only to ensemble means according to (\ref{eq: rx loss}) and (\ref{eq: rx loss grad joint}), and not to the empirical estimates used by Algorithms 1 and 2. Nevertheless, Proposition 1 suggests that training updates of the receiver are unaffected by the choice of alternating or simultaneous training. That said, given the distinct updates of the transmitter's parameter, the overall trajectory of the parameters ($\boldsymbol{\theta}_R$, $\boldsymbol{\theta}_T$) during training may differ according to the two algorithms.
\subsection{Transmitter gradient}
As shown in the previous section, the gradients used for learning receiver parameters $\boldsymbol{\theta}_R$ by alternating training (\ref{eq: est. rx loss grad}) or simultaneous training (\ref{eq: rx loss grad joint est}) may be directly estimated from the channel output samples $\mathbf{z}^{(q)}$. In contrast, the gradient used for learning transmitter parameters $\boldsymbol{\theta}_T$ according to (\ref{eq: rx loss}) cannot be directly estimated from the channel output samples. To obviate this problem, in Algorithms 1 and 2, the transmitter is trained
by exploring the space of transmitted waveforms according to a policy $\pi(\mathbf{a}|\mathbf{y}_{\boldsymbol{\theta}_T})$. We refer to the transmitter loss gradient obtained via policy gradient (\ref{eq: tx loss RL grad joint}) as the \emph{RL transmitter gradient}. The benefit of RL-based transmitter training is that it renders unnecessary access to the likelihood function $p(\mathbf{z}|\mathbf{y}_{\boldsymbol{\theta}_T}, \mathcal{H}_i)$ to evaluate the RL transmitter gradient, rather the gradient is estimated via samples. We now formalize the relation of the RL transmitter gradient (\ref{eq: tx loss RL grad joint}) and the transmitter gradient for a known likelihood obtained according to (\ref{eq: rx loss}).

As mentioned, if the likelihood $p(\mathbf{z}|\mathbf{y}_{\boldsymbol{\theta}_T},\mathcal{H}_i)$ were known, and if it were differentiable with respect to the transmitter parameter vector $\boldsymbol{\theta}_T$, the transmitter parameter vector $\boldsymbol{\theta}_T$ may be learned by minimizing the average loss (\ref{eq: rx loss}), which we rewrite as a function of both $\boldsymbol{\theta}_R$ and $\boldsymbol{\theta}_T$ as
\begin{equation}
	\mathcal{L}(\boldsymbol{\theta}_R,\boldsymbol{\theta}_T)= \sum_{i\in\{0,1\}}P(\mathcal{H}_i)\mathbb{E}_{\substack{ \mathbf{z}\sim p(\mathbf{z}|\mathbf{y}_{\boldsymbol{\theta}_T},\mathcal{H}_i)}}\big\{ \ell \big( f_{\boldsymbol{\theta}_R}(\mathbf{z}),i\big)\big\}. \label{eq: known loss}
\end{equation}
The gradient of (\ref{eq: known loss}) with respect to $\boldsymbol{\theta}_T$ is expressed as
\begin{equation}
	\begin{aligned}
		\nabla_{\boldsymbol{\theta}_T}\mathcal{L}(\boldsymbol{\theta}_R,\boldsymbol{\theta}_T )
		&=\sum_{i\in\{0,1\}}P(\mathcal{H}_i)\mathbb{E}_{\substack{ \mathbf{z}\sim p(\mathbf{z}|\mathbf{y}_{\boldsymbol{\theta}_T},\mathcal{H}_i)}}\big\{\ell\big( f_{\boldsymbol{\theta}_R}(\mathbf{z}),i\big) \nabla_{\boldsymbol{\theta}_T}\ln p(\mathbf{z}|\mathbf{y}_{\boldsymbol{\theta}_T},\mathcal{H}_i) \big\},
	\end{aligned} \label{eq: tx loss known grad}
\end{equation}
where the equality leverages the following relation
\begin{equation}
	\nabla_{\boldsymbol{\theta}_T}p(\mathbf{z}|\mathbf{y}_{\boldsymbol{\theta}_T},\mathcal{H}_i)=p(\mathbf{z}|\mathbf{y}_{\boldsymbol{\theta}_T},\mathcal{H}_i)\nabla_{\boldsymbol{\theta}_T}\ln p(\mathbf{z}|\mathbf{y}_{\boldsymbol{\theta}_T},\mathcal{H}_i). \label{eq: log-trick}
\end{equation}

The relation between the RL transmitter gradient $\nabla_{\boldsymbol{\theta}_T}\mathcal{L}^{\pi}(\boldsymbol{\theta}_R,\boldsymbol{\theta}_T)$ in (\ref{eq: tx loss RL grad joint}) and the transmitter gradient $\nabla_{\boldsymbol{\theta}_T}\mathcal{L}(\boldsymbol{\theta}_R,\boldsymbol{\theta}_T)$ in (\ref{eq: tx loss known grad}) is elucidated by the following proposition.
\begin{proposition}
	If likelihood function $p(\mathbf{z}|\mathbf{y}_{\boldsymbol{\theta}_T},\mathcal{H}_i)$ is differentiable with respect to the transmitter parameter vector $\boldsymbol{\theta}_T$ for $i\in\{0,1\}$, the following equality holds
	\begin{equation}
		\nabla_{\boldsymbol{\theta}_T}\mathcal{L}^{\pi}(\boldsymbol{\theta}_R,\boldsymbol{\theta}_T)=\nabla_{\boldsymbol{\theta}_T}\mathcal{L}(\boldsymbol{\theta}_R,\boldsymbol{\theta}_T).
	\end{equation}
\end{proposition}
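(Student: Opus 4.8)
The plan is to transform the RL transmitter gradient (\ref{eq: tx loss RL grad joint}) into the model-based gradient (\ref{eq: tx loss known grad}) through a sequence of elementary manipulations. First I would write the expectation in (\ref{eq: tx loss RL grad joint}) explicitly as nested integrals over the policy $\pi(\mathbf{a}|\mathbf{y}_{\boldsymbol{\theta}_T})$ and the channel $p(\mathbf{z}|\mathbf{a},\mathcal{H}_i)$, and then apply the log-derivative identity
\begin{equation*}
	\pi(\mathbf{a}|\mathbf{y}_{\boldsymbol{\theta}_T})\,\nabla_{\boldsymbol{\theta}_T}\ln\pi(\mathbf{a}|\mathbf{y}_{\boldsymbol{\theta}_T})=\nabla_{\boldsymbol{\theta}_T}\pi(\mathbf{a}|\mathbf{y}_{\boldsymbol{\theta}_T}),
\end{equation*}
which is the same step used in (\ref{eq: log-trick}) and is legitimate because the policy is assumed differentiable in $\boldsymbol{\theta}_T$. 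After this substitution the only $\boldsymbol{\theta}_T$-dependent factor left in the integrand is $\nabla_{\boldsymbol{\theta}_T}\pi(\mathbf{a}|\mathbf{y}_{\boldsymbol{\theta}_T})$, since neither $\ell\big(f_{\boldsymbol{\theta}_R}(\mathbf{z}),i\big)$ nor the per-transmission channel $p(\mathbf{z}|\mathbf{a},\mathcal{H}_i)$ depends on $\boldsymbol{\theta}_T$.

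Next I would bring the derivative back outside the $\mathbf{a}$-integral and use Fubini's theorem to exchange the order of the $\mathbf{a}$- and $\mathbf{z}$-integrations, so that the $\mathbf{a}$-integral acts only on the product $p(\mathbf{z}|\mathbf{a},\mathcal{H}_i)\,\pi(\mathbf{a}|\mathbf{y}_{\boldsymbol{\theta}_T})$. The crux of the argument is then the identification
\begin{equation*}
	\int p(\mathbf{z}|\mathbf{a},\mathcal{H}_i)\,\pi(\mathbf{a}|\mathbf{y}_{\boldsymbol{\theta}_T})\,d\mathbf{a}=p(\mathbf{z}|\mathbf{y}_{\boldsymbol{\theta}_T},\mathcal{H}_i),
\end{equation*}
i.e., averaging the per-transmission channel over the exploration policy reproduces the channel likelihood used in the known-model loss (\ref{eq: known loss}). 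Substituting this reduces $\nabla_{\boldsymbol{\theta}_T}\mathcal{L}^{\pi}(\boldsymbol{\theta}_R,\boldsymbol{\theta}_T)$ to $\sum_{i\in\{0,1\}}P(\mathcal{H}_i)\int\ell\big(f_{\boldsymbol{\theta}_R}(\mathbf{z}),i\big)\,\nabla_{\boldsymbol{\theta}_T}p(\mathbf{z}|\mathbf{y}_{\boldsymbol{\theta}_T},\mathcal{H}_i)\,d\mathbf{z}$. Finally, using the log-derivative identity (\ref{eq: log-trick}) in the reverse direction to rewrite $\nabla_{\boldsymbol{\theta}_T}p(\mathbf{z}|\mathbf{y}_{\boldsymbol{\theta}_T},\mathcal{H}_i)=p(\mathbf{z}|\mathbf{y}_{\boldsymbol{\theta}_T},\mathcal{H}_i)\,\nabla_{\boldsymbol{\theta}_T}\ln p(\mathbf{z}|\mathbf{y}_{\boldsymbol{\theta}_T},\mathcal{H}_i)$ turns the expression back into an expectation over $\mathbf{z}\sim p(\mathbf{z}|\mathbf{y}_{\boldsymbol{\theta}_T},\mathcal{H}_i)$, which is precisely (\ref{eq: tx loss known grad}), namely $\nabla_{\boldsymbol{\theta}_T}\mathcal{L}(\boldsymbol{\theta}_R,\boldsymbol{\theta}_T)$, completing the argument.

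The step I expect to require the most care is the interchange of differentiation and integration invoked twice above --- pulling $\nabla_{\boldsymbol{\theta}_T}$ out of the $\mathbf{a}$-integral and, symmetrically, inside the $\mathbf{z}$-integral in (\ref{eq: tx loss known grad}). I would justify both by differentiation under the integral sign (dominated convergence / Leibniz's rule), which requires an integrable dominating function for the relevant partial derivatives on a neighborhood of $\boldsymbol{\theta}_T$; the hypothesis that $p(\mathbf{z}|\mathbf{y}_{\boldsymbol{\theta}_T},\mathcal{H}_i)$ is differentiable in $\boldsymbol{\theta}_T$, together with the smoothness of $\pi(\mathbf{a}|\mathbf{y}_{\boldsymbol{\theta}_T})$ in $\boldsymbol{\theta}_T$ postulated when introducing the policy-gradient approach, is exactly what underpins these exchanges and both uses of the log-derivative identity. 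The one remaining point worth stating explicitly is the marginalization relation above: it records that the stochastic-transmitter cascade $\boldsymbol{\theta}_T\mapsto\mathbf{y}_{\boldsymbol{\theta}_T}\mapsto\mathbf{a}\sim\pi(\cdot|\mathbf{y}_{\boldsymbol{\theta}_T})\mapsto\mathbf{z}$ and the deterministic-waveform channel $\boldsymbol{\theta}_T\mapsto\mathbf{y}_{\boldsymbol{\theta}_T}\mapsto\mathbf{z}$ induce the same conditional law of $\mathbf{z}$ given $\boldsymbol{\theta}_T$ once the exploration randomness is averaged out, so that the two transmitter gradients agree term by term in $i$.
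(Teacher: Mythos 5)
Your proposal is correct and follows essentially the same route as the paper: the paper's Appendix C simply differentiates the rewritten loss $\mathcal{L}^{\pi}(\boldsymbol{\theta}_R,\boldsymbol{\theta}_T)=\sum_{i}P(\mathcal{H}_i)\int_{\mathcal{Z}}\ell\big(f_{\boldsymbol{\theta}_R}(\mathbf{z}),i\big)\,p(\mathbf{z}|\mathbf{y}_{\boldsymbol{\theta}_T},\mathcal{H}_i)\,d\mathbf{z}$ obtained in Appendix B via exactly your three ingredients --- the log-derivative identity (\ref{eq: log-trick}), Fubini's theorem, and the marginalization $\int_{\mathcal{A}}p(\mathbf{z}|\mathbf{a},\mathcal{H}_i)\,\pi(\mathbf{a}|\mathbf{y}_{\boldsymbol{\theta}_T})\,d\mathbf{a}=p(\mathbf{z}|\mathbf{y}_{\boldsymbol{\theta}_T},\mathcal{H}_i)$ --- and then invokes (\ref{eq: tx loss known grad}). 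The only difference is cosmetic (you start from the policy-gradient form and undo the log-trick, rather than differentiating the marginalized loss), plus your more explicit attention to differentiation under the integral sign, which the paper leaves implicit.
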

\begin{proof}
	See Appendix C.
\end{proof}
Proposition 2 establishes that the RL transmitter gradient $\nabla_{\boldsymbol{\theta}_T}\mathcal{L}^{\pi}(\boldsymbol{\theta}_R,\boldsymbol{\theta}_T)$ equals the transmitter gradient $\nabla_{\boldsymbol{\theta}_T}\mathcal{L}(\boldsymbol{\theta}_R,\boldsymbol{\theta}_T)$ for any given receiver parameters $\boldsymbol{\theta}_R$. Proposition 2 hence
provides a theoretical justification for replacing the gradient $\nabla_{\boldsymbol{\theta}_T}\mathcal{L}(\boldsymbol{\theta}_R,\boldsymbol{\theta}_T)$ with the RL gradient $\nabla_{\boldsymbol{\theta}_T}\mathcal{L}^{\pi}(\boldsymbol{\theta}_R,\boldsymbol{\theta}_T)$ to perform transmitter training as done in Algorithms 1 and 2.

\section{Numerical Results}

This section first introduces the simulation setup, and then it presents numerical examples of waveform design and detection performance that compare the proposed data-driven methodology with existing model-based approaches. While simulation results presented in this section rely on various models of target, clutter and interference, this work expressly distinguishes data-driven learning from model-based design. Learning schemes rely solely on data and not on model information. In contrast, model-based design implies a system structure that is based on a specific and known model. Furthermore, learning may rely on synthetic data containing diverse data that is generated according to a variety of models. In contrast, model-based design typically relies on a single model. For example, as we will see, a synthetic dataset for learning may contain multiple clutter sample sets, each generated according to a different clutter model. Conversely, a single clutter model is typically assumed for model-based design.

\subsection{Models, Policy, and Parameters} 

\subsubsection{Models of target, clutter, and noise}
The target is stationary, and has a Rayleigh envelope, i.e., $\alpha\sim \mathcal{CN}(0,\sigma_{\alpha}^2)$. The noise has a zero-mean Gaussian distribution with the correlation matrix $[\boldsymbol{\Omega }_n]_{v,h}=\sigma _{n}^{2}\rho^{|v-h|}$ for $(v,h)\in \{1,\cdots,K\}^2$, where $\sigma_n^2$ is the noise power and $\rho$ is the one-lag correlation coefficient. The clutter vector in (\ref{eq: rx}) is the superposition of returns from $2K-1$ consecutive range cells, reflecting all clutter illuminated by the $K$-length signal as it sweeps in range across the target. Accordingly, the clutter vector may be expressed as
\begin{equation}
	{\mathbf{c}}=\sum_{\substack{ g=-K+1 }}^{K-1}{\gamma }_{g}\mathbf{J%
	}_{g}{\mathbf{y}},  \label{eq: clutter}
\end{equation}
where $\mathbf{J}_{g}$ represents the shifting matrix at the $g$th range cell with elements 
\begin{equation}
	\big[\mathbf{J}_{g}\big]_{v,h}=\left\{ \begin{aligned} &1 \quad \text{if} \quad v-h=g\\
		&0\quad \text{if} \quad v-h\neq g \end{aligned}\quad (v,h)\in \{1,\cdots
	,K\}^{2}\right. .
\end{equation}
The magnitude $|\gamma_g|$ of the $g$th clutter scattering coefficient is generated according to a Weibull distribution \cite{Richards 2010}
\begin{equation}
	p(|\gamma_g|)=\frac{\beta}{\nu^{\beta}}|\gamma_g|^{\beta-1}\exp\bigg( - \frac{|\gamma_g|^{\beta}}{\nu^{\beta}} \bigg), \label{eq: Weibull pdf}
\end{equation}
where $\beta$ is the shape parameter and $\nu$ is the scale parameter of the distribution. Let $\sigma_{\gamma_g}^2$ represent the power of the clutter scattering coefficient $\gamma_g$. The relation between $\sigma_{\gamma_g}^2$ and the Weibull distribution parameters $\{\beta,\nu\}$ is \cite{Farina 1987}
\begin{equation}
	\sigma_{\gamma_g}^2=\text{E}\{|{\gamma}_g|^2\}=\frac{2\nu^2}{\beta}\Gamma\bigg(\frac{2}{\beta}\bigg),
\end{equation}
where $\Gamma(\cdot)$ is the Gamma function. The nominal range of the shape parameter is $0.25\leq\beta\leq2$ \cite{shape}. In the simulation, the complex-valued clutter scattering coefficient $\gamma_g$ is obtained by multiplying a real-valued Weibull random variable $|\gamma_g|$ with the factor $\exp(j\psi_g)$, where $\psi_g$ is the phase of $\gamma_g$ distributed uniformly in the interval $(0,2\pi)$. 
When the shape parameter $\beta=2$, the clutter scattering coefficient $\gamma_g$ follows the Gaussian distribution $\gamma_g \sim \mathcal{CN}(0,\sigma_{\gamma_g}^2)$. Based on the assumed mathematical models of the target, clutter and noise, it can be shown that the optimal detector in the NP sense is the square law detector \cite{Richards 2005}, and the adaptive waveform for target detection can be obtained by maximizing the signal-to-clutter-plus-noise ratio at the receiver output at the time of target detection (see Appendix A of \cite{Wei 2019NN} for details). 

\subsubsection{Transmitter and Receiver Models}
Waveform generation and detection is implemented using feedforward neural networks as explained in Section II-B. The transmitter $\tilde{f}_{\boldsymbol{\theta}_T}(\cdot)$
is a feedforward neural network with four layers, i.e., an input layer with $2K$ neurons, two hidden layers with $M=24$ neurons, and an output layer with $2K$ neurons. The activation function is exponential linear unit (ELU) \cite{ELU}. 
The receiver  $\tilde{f}_{\boldsymbol{\theta}_R}(\cdot)$ is implemented as a feedforward neural network with four layers, i.e., an input layer with
$2K$ neurons, two hidden layers with $M$ neurons, and an output layer with one neuron. The sigmoid function is chosen as the activation function. The layout of transmitter and receiver networks is summarized in Table I.
\begin{table}[H]
	\caption{Layout of the transmitter and receiver networks}
	\label{table:1}
	\centering
	\resizebox{0.6\columnwidth}{!}{
	\begin{tabular}{@{}cccccccc@{}}\toprule
		\multicolumn{1}{c}{} & \multicolumn{3}{c}{Transmitter $\tilde{f}_{\boldsymbol{\theta}_T}(\cdot)$} & \phantom{a} & \multicolumn{3}{c}{Receiver $\tilde{f}_{\boldsymbol{\theta}_R}(\cdot)$} \\
		 \cmidrule{2-4} \cmidrule{6-8} 
		Layer& 1 & 2-3 & 4 && 1 & 2-3 & 4 \\
		Dimension& $2K$ & $M$ & $2K$ && $2K$ & $M$ & $1$ \\ 
		Activation& - & ELU & Linear && - & Sigmoid & Sigmoid \\ 
		\bottomrule
	\end{tabular}
	}
\end{table}

\subsubsection{Gaussian policy} 
A Gaussian policy $\pi(\mathbf{a}|\mathbf{y}_{\boldsymbol{\theta}_T})$ is adopted for RL-based transmitter training. Accordingly, the output of the stochastic transmitter follows a complex Gaussian distribution $\mathbf{a}\sim\pi(\mathbf{a}|\mathbf{y}_{\boldsymbol{\theta}_T})=\mathcal{CN}\big(\sqrt{1-\sigma^2_p}\mathbf{y}_{\boldsymbol{\theta}_T},\frac{\sigma^2_p}{K}\mathbf{I}_K\big)$, where the per-chip variance $\sigma^2_p$ is referred to as the \emph{policy hyperparameter}. When $\sigma^2_p=0$, the stochastic policy becomes deterministic \cite{Silver 2014}, i.e., the policy is governed by a Dirac function at $\mathbf{y}_{\boldsymbol{\theta}_T}$. In this case, the policy does not explore the space of transmitted waveforms, but it ``exploits'' the current waveform. At the opposite end, when $\sigma^2_p=1$, the output of the stochastic transmitter is independent of $\mathbf{y}_{\boldsymbol{\theta}_T}$, and the policy becomes 
zero-mean complex-Gaussian noise with covariance matrix $\mathbf{I}_K/K$. Thus, the policy hyperparameter $\sigma^2_p$ is selected in the range $(0,1)$, and its value sets a trade-off between exploration of new waveforms versus exploitation of current waveform. 

\subsubsection{Training Parameters}
The initialization waveform $\mathbf{s}$ is a linear frequency modulated pulse with $K=8$ complex-valued chips and chirp rate $R=(100\times10^3)/(40\times 10^{-6})$ Hz/s. Specifically, the $k$th chip of $\mathbf{s}$ is given by
\begin{equation}
	\mathbf{s}(k)=\frac{1}{\sqrt{K}}\exp \big\{ j\pi R \big( k/f_s\big)^2 \big\} 
\end{equation}
for $\forall k\in\{0,\dots,K-1\}$, where $f_s=200$ kHz. The signal-to-noise ratio (SNR) is defined as
\begin{equation}
\text{SNR}=10\log_{10}\bigg\{\frac{\sigma_{\alpha}^2}{\sigma_n^2}\bigg\}. \label{eq: SNR}
\end{equation}
Training was performed at $\text{SNR}=12.5$ dB. The clutter environment is uniform
with $\sigma_{\gamma_g}^2=-11.7$ dB, $\forall g\in\{-K+1,\dots, K-1\}$, such that the overall clutter power is $\sum_{g=-(K-1)}^{K-1}\sigma_{\gamma_g}^2=0$ dB. The noise power is $\sigma_n^2=0$ dB, and the one-lag correlation coefficient $\rho=0.7$. 
Denote $\beta_{\text{train}}$ and $\beta_{\text{test}}$ the shape parameters of the clutter distribution (\ref{eq: Weibull pdf}) applied in training and test stage, respectively. Unless stated otherwise, we set $\beta_{\text{train}}=\beta_{\text{test}}=2$. 

To obtain a balanced classification dataset, the training set is populated by samples belonging to either hypothesis with equal prior probability, i.e., $
P(\mathcal{H}_0)=P(\mathcal{H}_1)=0.5$. The number of training samples is set as $Q_R=Q_T=Q=2^{13}$ in the estimated gradients (\ref{eq: est. rx loss grad}), (\ref{eq: tx loss RL grad est}), (\ref{eq: rx loss grad joint est}), and (\ref{eq: tx loss RL grad joint est}). Unless stated otherwise, the policy parameter is set to $\sigma^2_p=10^{-1.5}$, and the penalty parameter is $\lambda=0$, i.e., there are no waveform constraints.
The Adam optimizer \cite{adam} is adopted to train the system over a number of iterations chosen by trial and error. The learning rate is  $\epsilon=0.005$.
In the testing phase, $2\times10^5$ samples are used to estimate the probability of false alarm ($P_{fa}$) under hypothesis $\mathcal{H}_0$, while $5\times10^4$ samples are used to estimate the probability of detection ($P_d$) under hypothesis $\mathcal{H}_1$. Receiver operating characteristic (ROC) curves are obtained via Monte Carlo simulations by varying the threshold applied at the output of the receiver. Results are obtained by averaging over fifty trials. Numerical results presented in this section assume simultaneous training, unless stated otherwise.

\subsection{Results and Discussion}

\subsubsection{Simultaneous Training vs Training with Known Likelihood}
We first analyze the impact of the choice of the policy hyperparameter $\sigma_p^2$ on the performance on the training set. Fig. \ref{f: var_loss} shows the empirical cross-entropy loss of simultaneous training versus the policy hyperparameter $\sigma^2_p$ upon the completion of the training process.
The empirical loss of the system training with a known channel (\ref{eq: known loss}) is plotted as a comparison. It is seen that there is an optimal policy parameter $\sigma^2_p$ for which the empirical loss of simultaneous training approaches the loss with known channel. As the policy hyperparameter $\sigma^2_p$ tends to $0$, the output of the stochastic transmitter $\mathbf{a}$ is close to the waveform $\mathbf{y}_{\boldsymbol{\theta}_T}$, which leads to no exploration of the space of transmitted waveforms. In contrast, when the policy parameter $\sigma^2_p$ tends to $1$, the output of the stochastic transmitter becomes a complex Gaussian noise with zero mean and covariance matrix $\mathbf{I}_K/K$. In both cases, the RL transmitter gradient is difficult to estimate accurately. 
\begin{figure}
	\centering
	\vspace{-3ex} 
	\includegraphics[width=0.7\linewidth]{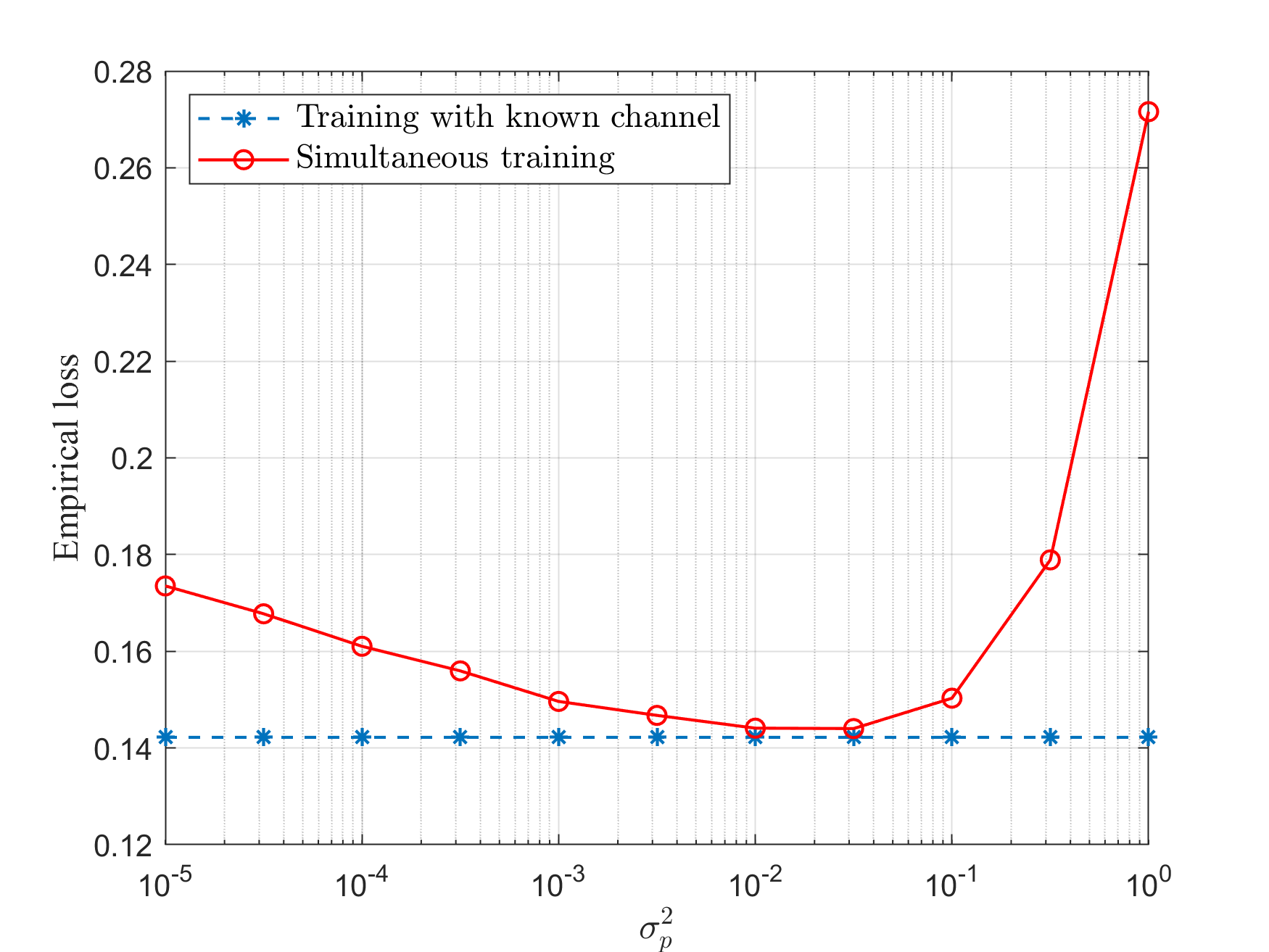} 
	\vspace{-3ex}  
	\caption{Empirical training loss versus  policy hyperparameter $\sigma^2_p$ for simultaneous training algorithm and training with known channel, respectively.} \label{f: var_loss}
\end{figure}

While Fig. \ref{f: var_loss} evaluates the performance on the training set in terms of empirical cross-entropy loss, the choice of the policy hyperparameter $\sigma^2_p$ should be based on validation data and in terms of the testing criterion that is ultimately of interest. To elaborate on this point,
ROC curves obtained by simultaneous training with different values of the policy hyperparameter $\sigma^2_p$ and training with known channel are shown in Fig. \ref{f: var_ROC}. As shown in the figure, simultaneous training with $\sigma^2_p=10^{-1.5}$ achieves a similar ROC as training with known channel. The choice $\sigma^2_p=10^{-1.5}$, also has the lowest empirical training loss in Fig. \ref{f: var_loss}. These results suggest that training is not subject to overfitting \cite{osvaldo1}.
\begin{figure}
	\centering
	\vspace{-3ex} 
	\includegraphics[width=0.7\linewidth]{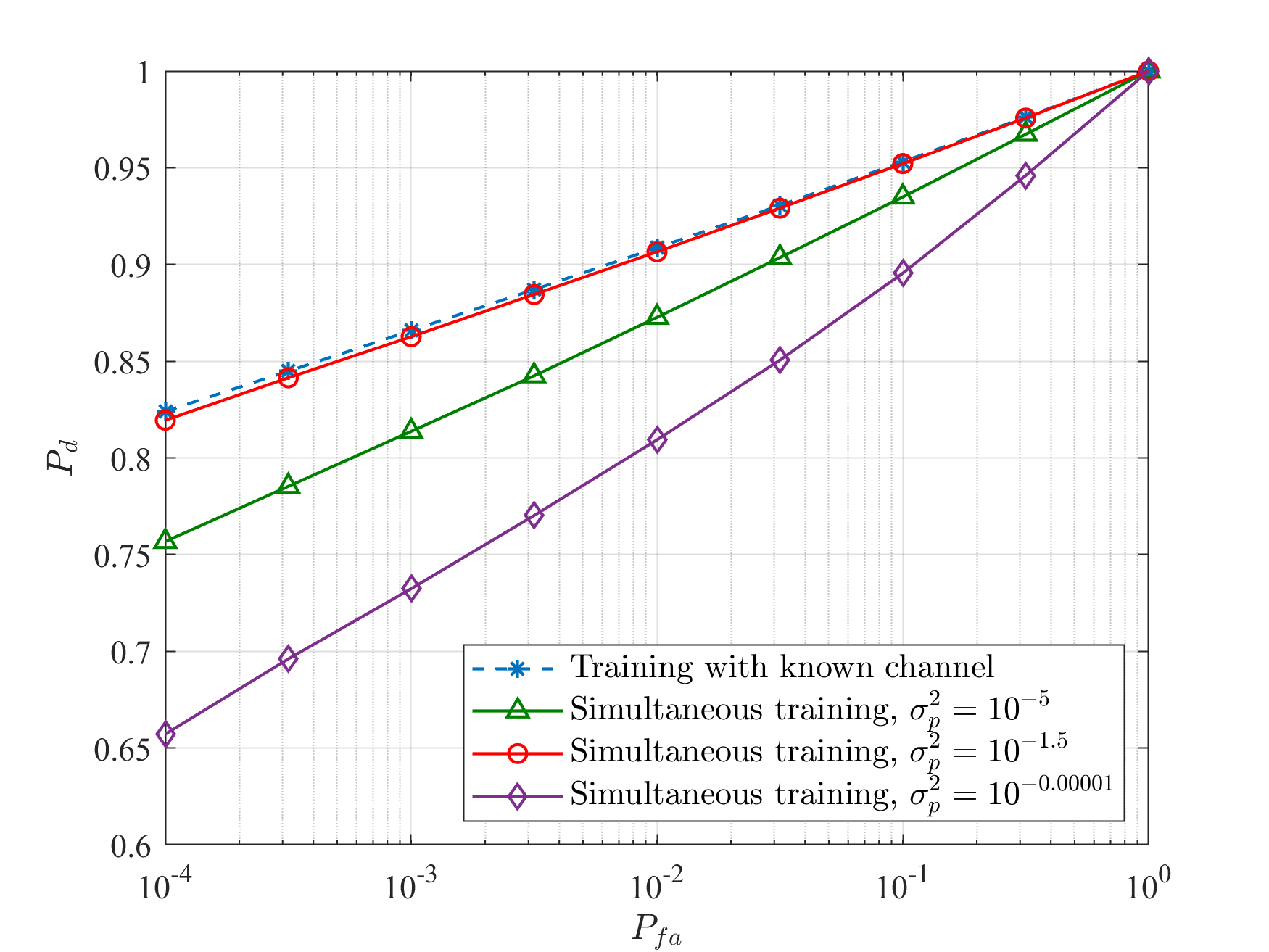}  
	\vspace{-3ex} 
	\caption{ROC curves for training with known channel and simultaneous training with different values of policy parameter $\sigma^2_p$.} \label{f: var_ROC}
\end{figure}


\subsubsection{Simultaneous Training vs Alternating Training}
We now compare simultaneous and alternating training in terms of ROC curves in Fig. \ref{f: alter}. ROC curves based on the optimal detector in the NP sense, namely, the square law detector \cite{Richards 2005} and the adaptive/initialization waveform are plotted as benchmarks. As shown in the figure, simultaneous training provides a similar detection performance as alternating training. Furthermore, both simultaneous training and alternating training are seen to result in significant improvements as compared to training of only the receiver, and provide detection performance comparable to adaptive waveform \cite{Wei 2019NN} and square law detector.

\begin{figure}[H]
	\centering
	\vspace{-3ex} 
	\includegraphics[width=0.7\linewidth]{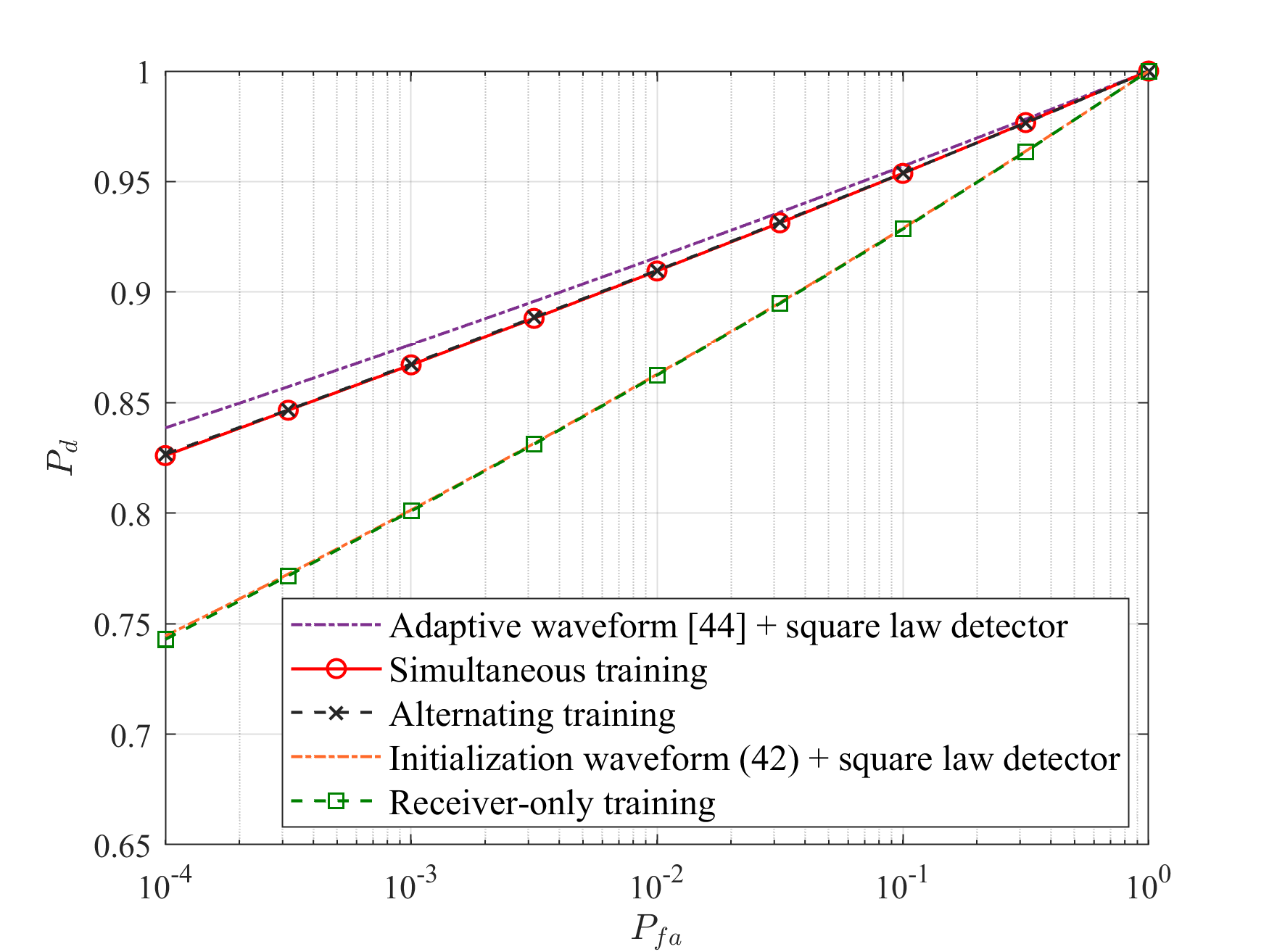}  
	\vspace{-3ex} 
	\caption{ROC curves with and without transmitter training.}
	\label{f: alter}
\end{figure}

\subsubsection{Learning Gaussian and Non-Gaussian Clutter} 
Two sets of ROC curves under different clutter statistics are illustrated in Fig. \ref{f: non-G1}. Each set contains two ROC curves with the same clutter statistics: one curve is obtained based on simultaneous training, and the other one is based on model-based design. For simultaneous training, the shape parameter of the clutter distribution (\ref{eq: Weibull pdf}) in the training stage is the same as that in the test stage, i.e, $\beta_{\text{train}}=\beta_{\text{test}}$. In the test stage, for Gaussian clutter ($\beta_{\text{test}}=2$), the model-based ROC curve is obtained by the adaptive waveform and the optimal detector in the NP sense. As expected, simultaneous training provides a comparable detection performance with the adaptive waveform and square law detector (also shown in Fig. \ref{f: alter}). In contrast, when the clutter is non-Gaussian ($\beta_{\text{test}}=0.25$), the optimal detector in the NP sense is mathematically intractable. Under this scenario, the data-driven approach is beneficial since it relies on data rather than a model. As observed in the figure, for non-Gaussian clutter with a shape parameter $\beta_{\text{test}}=0.25$, simultaneous training outperforms the adaptive waveform and square law detector.

\begin{figure}[H]
	\centering
	\vspace{-3ex} 
	\includegraphics[width=0.7\linewidth]{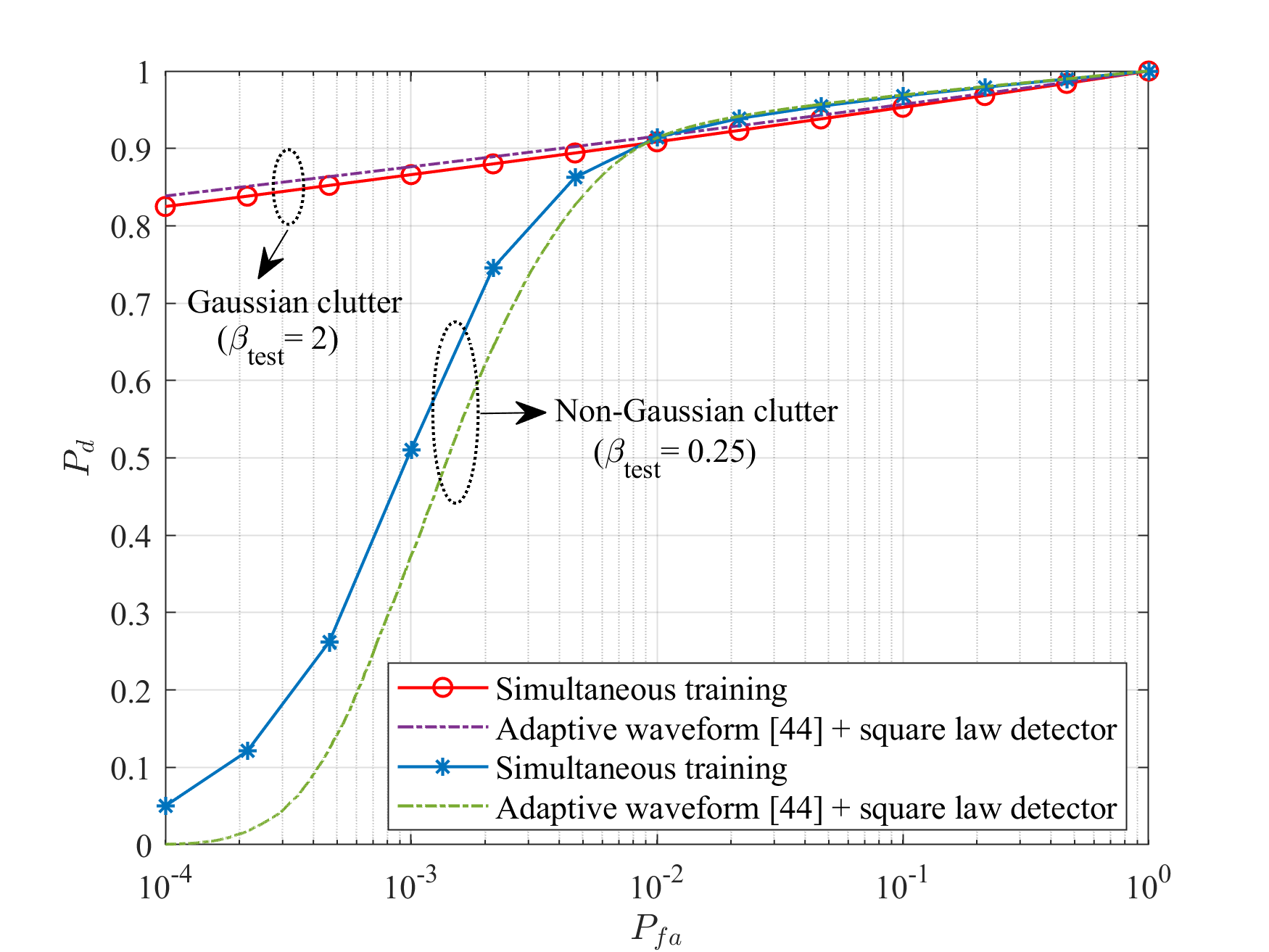}  
	\vspace{-3ex} 
	\caption{ROC curves for Gaussian/non-Gaussian clutter. The end-to-end radar system is trained and tested by the same clutter statistics, i.e, $\beta_{\text{train}}=\beta_{\text{test}}$.}
	\label{f: non-G1}
\end{figure}

\subsubsection{Simultaneous Training with Mixed Clutter Statistics} 
The robustness of the trained radar system to the clutter statistics is investigated next. As discussed previously, model-based design relies on a single clutter model, whereas data-driven learning depends on a training dataset. The dataset may contain samples from multiple clutter models. Thus, the 
system based on data-driven learning may be robustified by drawing samples from a mixture of clutter models. In the test stage, the clutter model may not be the same as any of the clutter models used in the training stage. As shown in the figure, for simultaneous training, the training dataset contains clutter samples generated from (\ref{eq: Weibull pdf}) with four different values of shape parameter $\beta_{\text{train}}\in \{0.25, 0.5, 0.75, 1\}$. The test data is generated with a clutter shape parameter $\beta_{\text{test}}=0.3$ not included in the training dataset. The end-to-end leaning radar system trained by mixing clutter samples provides performance gains compared to a model-based system using an adaptive waveform and square law detector.

\begin{figure}[H]
	\centering
	\vspace{-3ex} 
	\includegraphics[width=0.7\linewidth]{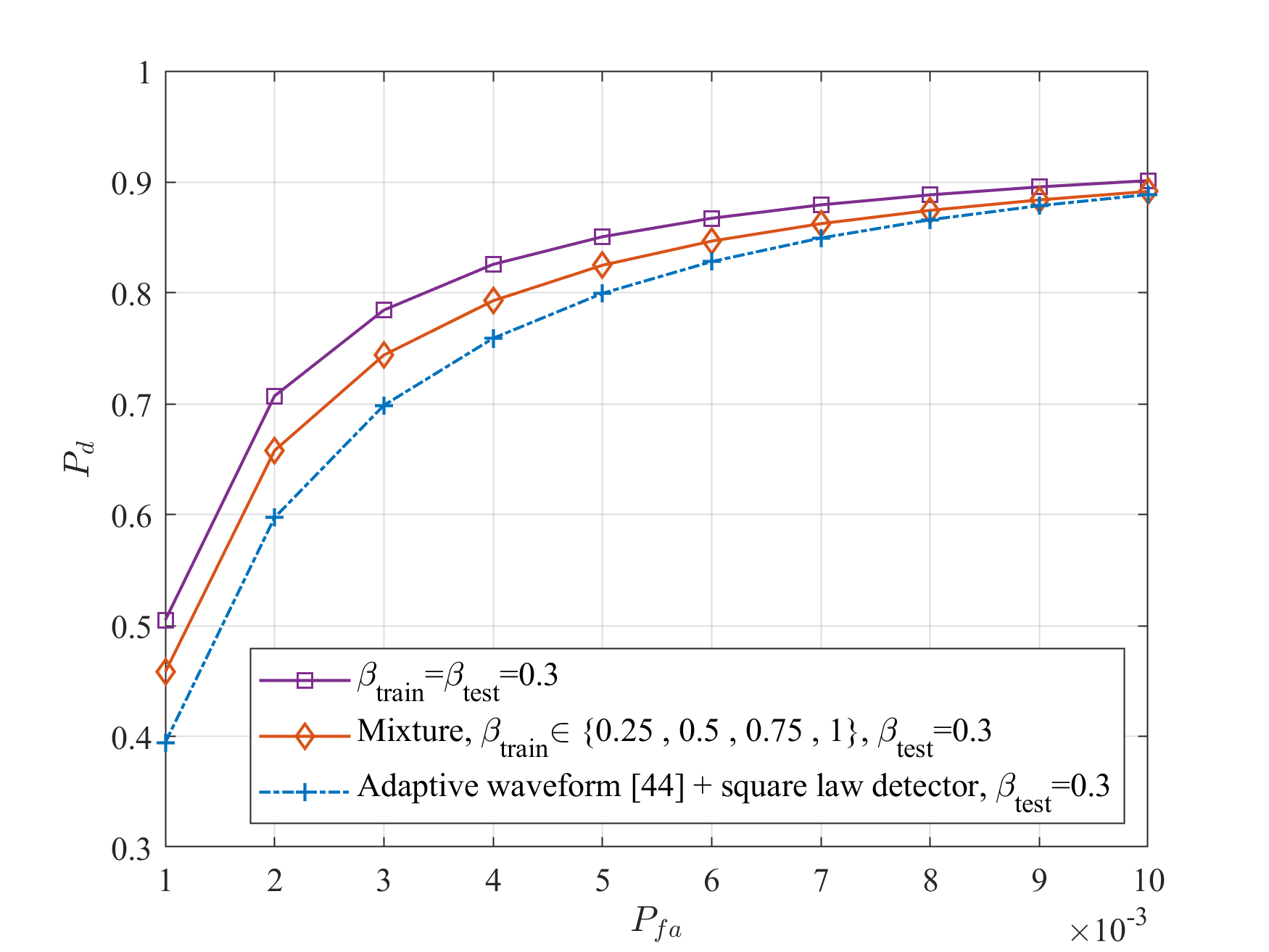}  
	\vspace{-3ex} 
	\caption{ROC curves for non-Gaussian clutter. To robustify detection performance, the end-to-end leaning radar system is trained with mixed clutter statistics, while testing for a clutter model different than used for training.}
	\label{f: non-mix}
\end{figure}

\subsubsection{Simultaneous Training under PAR Constraint} Detection performance with waveforms learned subject to a PAR constraint is shown in Fig. \ref{f:PAPR_fig1}. The end-to-end system trained with no PAR constraint, i.e., $\lambda=0$, serves as the reference. It is observed the detection performance degrades as the value of the penalty parameter $\lambda$ increases. Moreover,
PAR values of waveforms with different $\lambda$ are shown in Table \ref{table:3}. As shown in Fig. \ref{f:PAPR_fig1} and Table \ref{table:3}, there is a tradeoff between detection performance and PAR level. For instance,
given $P_{fa}=5\times 10^{-4}$, training the transmitter with the largest penalty parameter $\lambda=0.1$ yields the lowest $P_d=0.852$ with the lowest PAR value $0.17$ dB. In contrast, training the transmitter with no PAR constraint, i.e., $\lambda=0$, yields the best detection with the largest PAR value $3.92$ dB. 
Fig. \ref{f:PAPR_fig2} compares the normalized modulus of waveforms with different values of the penalty parameter $\lambda$. As shown in Fig. \ref{f:PAPR_fig2} and Table \ref{table:3}, the larger the penalty parameter $\lambda$ adopted in the simultaneous training, the smaller the PAR value of the waveform. 
\begin{table}[H]
	\caption{PAR values of waveforms with different values of
		penalty parameter $\lambda$}
	\label{table:3}
	\centering
	\resizebox{0.5\columnwidth}{!}{
		\begin{tabular}{@{}cccc@{}}\toprule
			& $\lambda=0$ (reference) & $\lambda=0.01$ & $\lambda=0.1$ \\ 
			\cmidrule{2-4}
			PAR [dB] (\ref{eq: PAPR complex}) & 3.92 & 1.76 & 0.17 \\
			\bottomrule
		\end{tabular}
	}
\end{table} 

\begin{figure}[H]
	\centering
	\vspace{-3ex} 
	\includegraphics[width=0.68\linewidth]{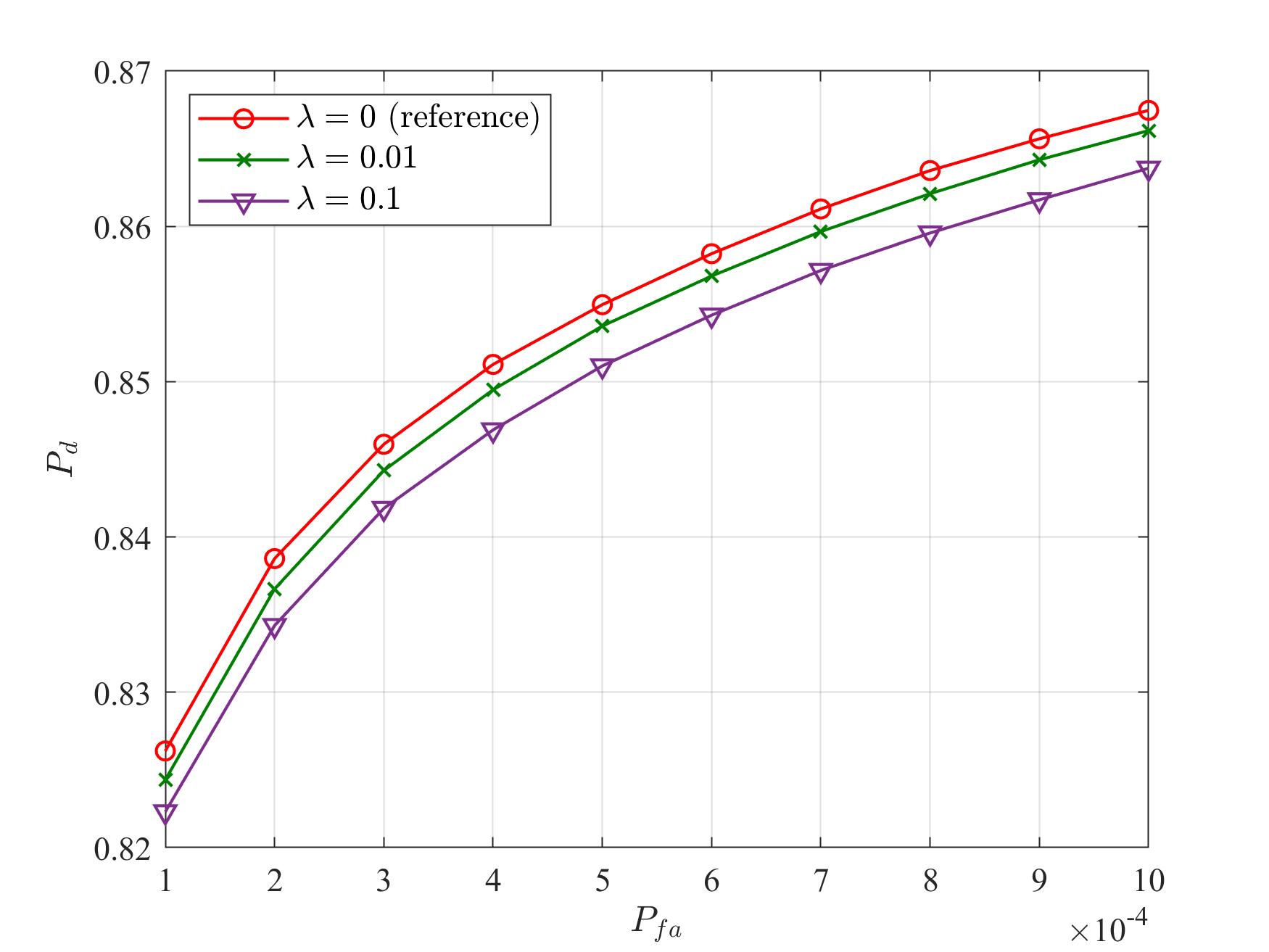}  
	\vspace{-3ex} 
	\caption{ROC curves for PAR constraint with the different
		values of the penalty parameter $\lambda$.}
	\label{f:PAPR_fig1}
\end{figure}

\begin{figure}[H]
	\centering
	\vspace{-3ex} 
	\includegraphics[width=0.68\linewidth]{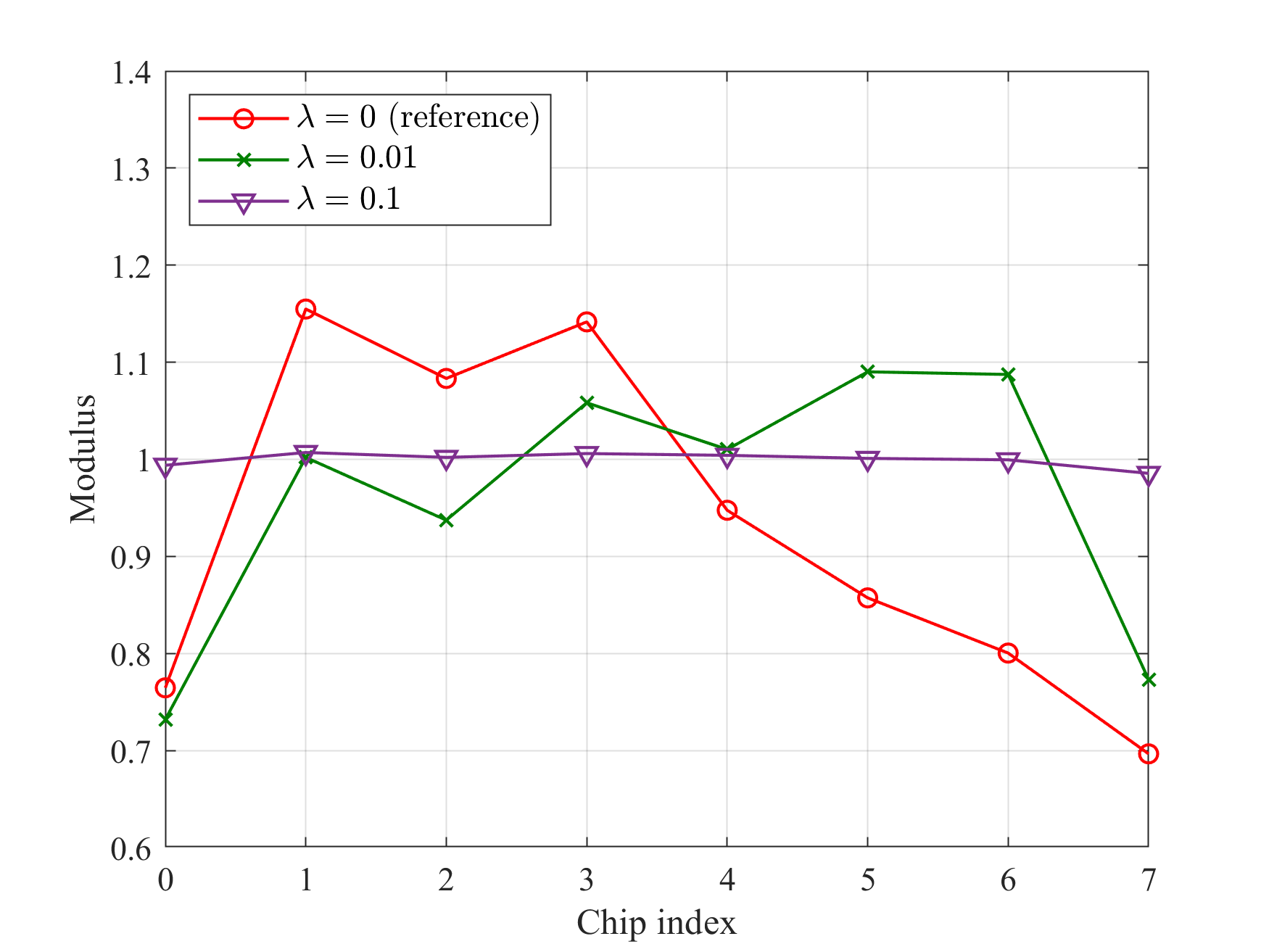}  
	\vspace{-3ex} 
	\caption{Normalized modulus of transmitted waveforms with different values of penalty parameter $\lambda$.}
	\label{f:PAPR_fig2}
\end{figure}

\subsubsection{Simultaneous Training under Spectral Compatibility Constraint} 
ROC curves for spectral compatibility constraint with different values of the penalty parameter $\lambda$ are illustrated in Fig. \ref{f:spectrum_fig1}.
The shared frequency bands are $\Gamma_1=[0.3,0.35]$ and $\Gamma_2=[0.5,0.6]$.
The end-to-end system trained with no spectral compatibility constraint, i.e., $\lambda=0$, serves as the reference. Training the transmitter with a large value of the penalty parameter $\lambda$ is seen to result in performance degradation. Interfering energy from radar waveforms trained with different values of $\lambda$ are shown in Table \ref{table:4}. It is observed that $\lambda$ plays an important role in controlling the tradeoff between detection performance and spectral compatibility of the waveform. For instance, for a fixed $P_{fa}=5 \times 10^{-4}$, training the transmitter with $\lambda=0$ yields $P_d=0.855$ with an amount of interfering energy $-5.79$ dB on the shared frequency bands, while training the transmitter with $\lambda=1$ creates notches in the spectrum of the transmitted waveform at the shared frequency bands.
Energy spectral densities of transmitted
waveforms with different values of $\lambda$ are illustrated in Fig. \ref%
{f:spectrum_fig2}. A larger the penalty parameter $\lambda$ results in a lower amount of interfering energy in the prescribed frequency shared
regions. Note, for instance, that the nulls of the energy spectrum density of the waveform for $%
\lambda=1$ are much deeper than their counterparts for $\lambda=0.2$. 

\begin{figure}[H]
	\centering
	\vspace{-3ex}
	\includegraphics[width=0.7\linewidth]{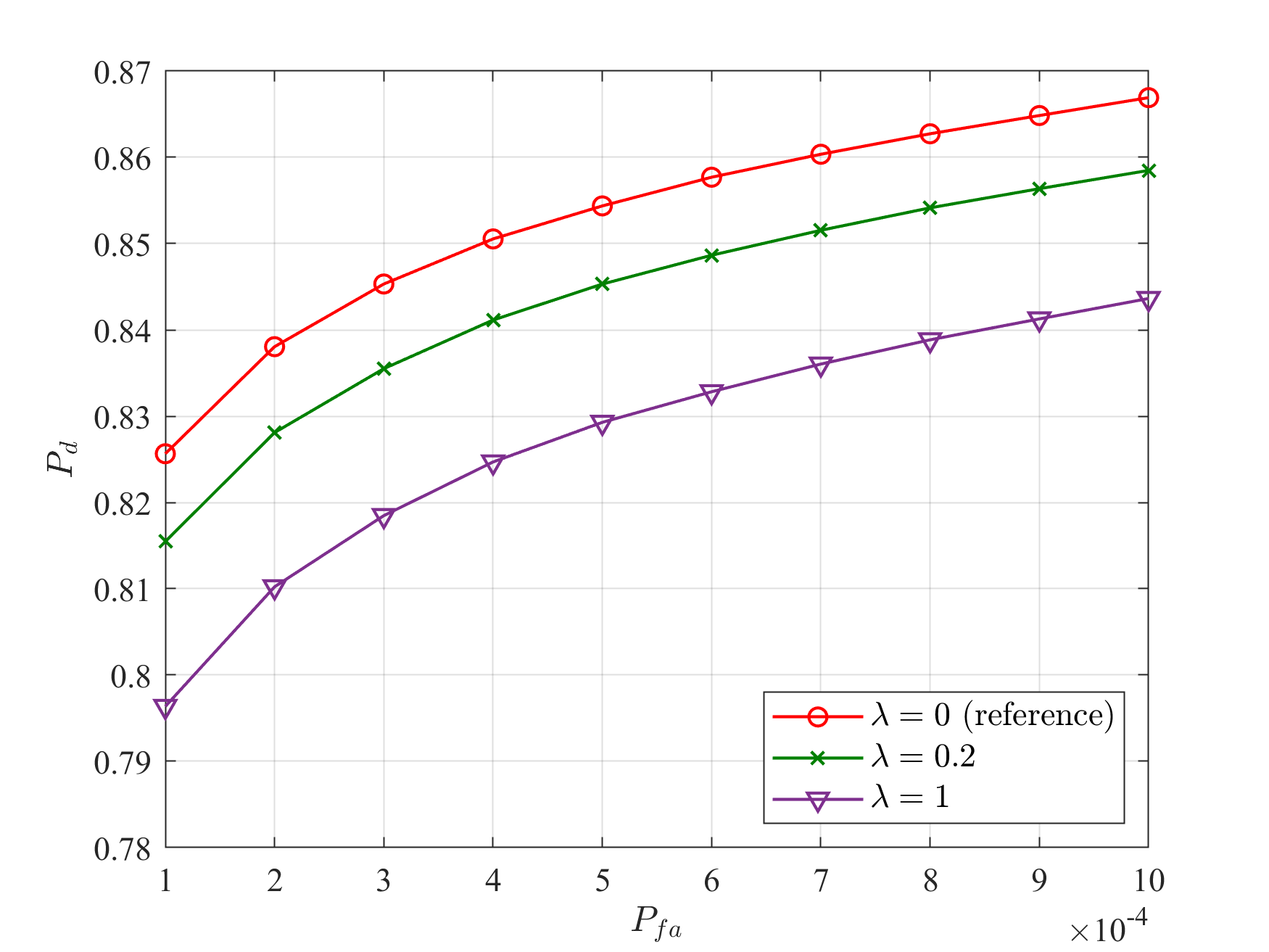}  
	\vspace{-3ex}
	\caption{ROC curves for spectral compatibility constraint for different values of penalty parameter $\lambda$.}
	\label{f:spectrum_fig1}
\end{figure}

\begin{table}[H]
	\caption{Interfering energy from radar waveforms with different values of
		weight parameter $\lambda$ }
	\label{table:4}
	\centering
	\resizebox{0.65\columnwidth}{!}{
		\begin{tabular}{@{}cccc@{}}\toprule
			& $\lambda=0$ (reference) & $\lambda=0.2$ & $\lambda=1$ \\ 
			\cmidrule{2-4}
			Interfering energy [dB] (\ref{eq: spectrum complex})  & -5.79 & -10.39& -17.11 \\
			\bottomrule
		\end{tabular}
	}
\end{table}
\begin{figure}[H]
	\centering
	\vspace{-3ex}
	\includegraphics[width=0.7\linewidth]{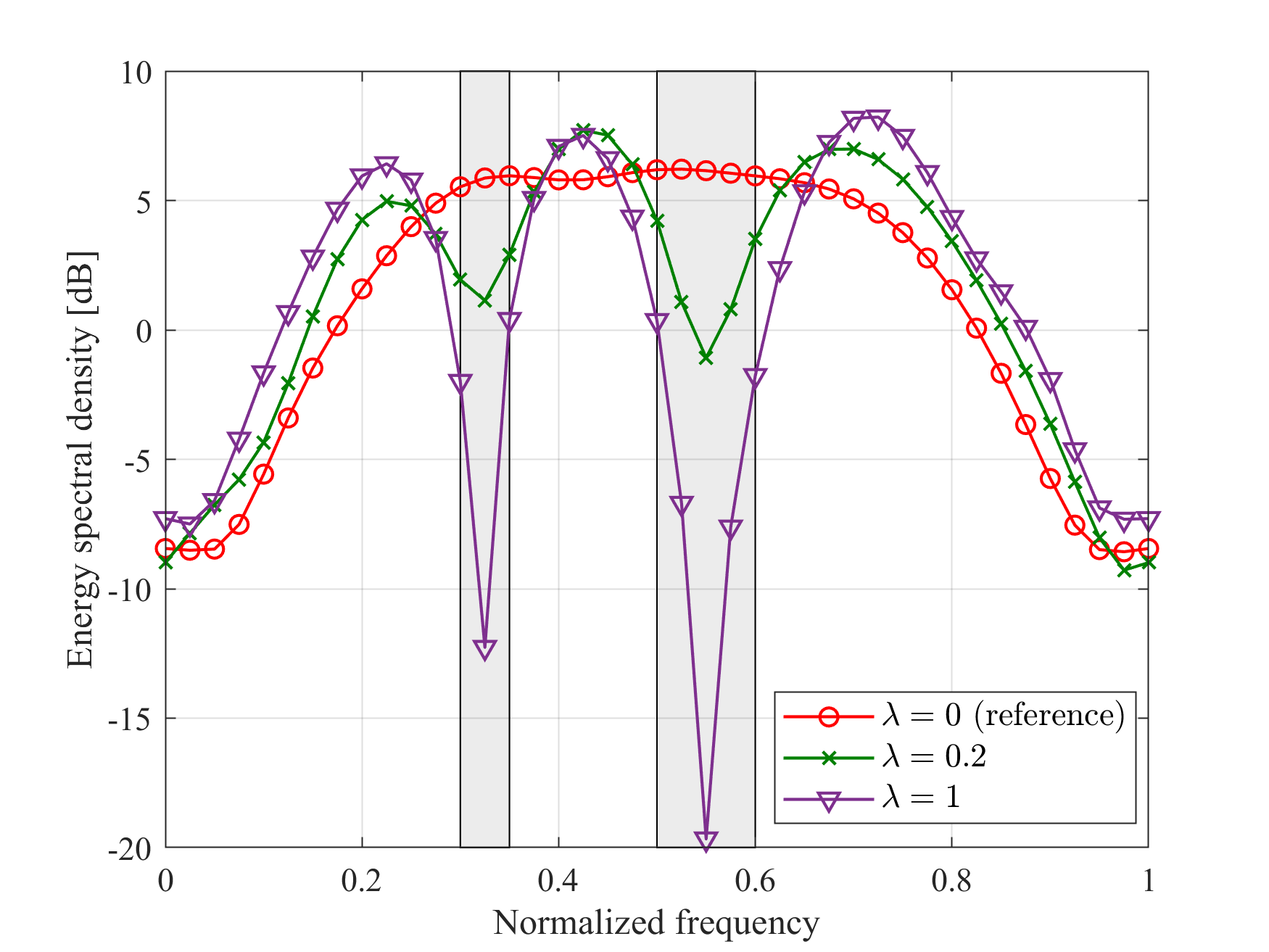}  
	\vspace{-3ex}
	\caption{Energy spectral density of waveforms with different values of penalty parameter $\lambda$.}
	\label{f:spectrum_fig2}
\end{figure}

\section{Conclusions}

In this paper, we have formulated the radar design problem as end-to-end
learning of waveform generation and detection. We have developed two
training algorithms, both of which are able to incorporate various waveform
constraints into the system design. Training may be implemented either as simultaneous supervised training of the receiver and RL-based training of the transmitter, or as alternating between training of the receiver and of the transmitter. Both training algorithms have similar performance. We have also robustified the detection performance by training the system with
mixed clutter statistics. Numerical results have shown that the proposed
end-to-end learning approaches are beneficial under non-Gaussian clutter, and
successfully adapt the transmitted waveform to actual statistics of
environmental conditions, while satisfying operational constraints.

\numberwithin{equation}{section} \appendices
\section{Gradient of Penalty Functions}
In this appendix are derived the respective gradients of the penalty functions (\ref{eq: PAPR complex}) and (\ref{eq: spectrum complex}) with respect to the transmitter parameter vector $\boldsymbol{\theta}_T$. To facilitate the presentation, let $\overline{\mathbf{y}}_{\boldsymbol{\theta}_T}$ represent a $2K \times 1$ real vector comprising the real and imaginary parts of the waveform $\mathbf{y}_{\boldsymbol{\theta}_T}$, i.e., $\overline{\mathbf{y}}_{\boldsymbol{\theta}_T}=\big[\Re(\mathbf{y}_{\boldsymbol{\theta}_T}), \Im (\mathbf{y}_{\boldsymbol{\theta}_T})\big]^T$. 
\subsubsection{Gradient of PAR Penalty Function}	
	As discussed in Section II-B, the transmitted power is normalized such that $||\mathbf{y}_{\boldsymbol{\theta}_T} ||^2=||\overline{\mathbf{y}}_{\boldsymbol{\theta}_T}||^2=1$. Let subscript ``max'' represent the chip index associated with the PAR value (\ref{eq: PAPR complex}). By leveraging the chain rule, the gradient of (\ref{eq: PAPR complex}) with respect to $\boldsymbol{\theta}_T$ is written
	\begin{equation}
		\nabla_{\boldsymbol{\theta}_T}J_{\text{PAR}}(\boldsymbol{\theta}_T)=\nabla_{\boldsymbol{\theta}_T}\overline{\mathbf{y}}_{\boldsymbol{\theta}_T} \cdot \mathbf{g}_{\text{PAR}},
	\end{equation}
	where $\mathbf{g}_{\text{PAR}}$ represents the gradient of the PAR penalty function $J_{\text{PAR}}(\boldsymbol{\theta}_T)$ with respect to $\overline{\mathbf{y}}_{\boldsymbol{\theta}_T}$, and is given by
	\begin{equation}
		\mathbf{g}_{\text{PAR}}=\big[
		\begin{array}{c;{2pt/2pt}c}
			0,\dots ,0, 2K\Re({{y}}_{\boldsymbol{\theta}_T,\text{max}}),0, \dots, 0 & 0,\dots, 0, 2K\Im({{y}}_{\boldsymbol{\theta}_T,\text{max}}),0, \dots, 0
		\end{array}
		\big]^T.
	\end{equation}  
\subsubsection{Gradient of Spectral Compatibility Penalty Function}	According to the chain rule, the gradient of (\ref{eq: spectrum complex}) with respect to $\boldsymbol{\theta}_T$ is expressed
	\begin{equation}
		\nabla_{\boldsymbol{\theta}_T}J_{\text{spectrum}}(\boldsymbol{\theta}_T)=\nabla_{\boldsymbol{\theta}_T}\overline{\mathbf{y}}_{\boldsymbol{\theta}_T} \cdot \mathbf{g}_{\text{spectrum}},
	\end{equation}
	where $\mathbf{g}_{\text{spectrum}}$ denotes the gradient of the spectral compatibility penalty function $J_{\text{spectrum}}(\boldsymbol{\theta}_T)$ with respect to $\overline{\mathbf{y}}_{\boldsymbol{\theta}_T}$, and is given by
	\begin{equation}
		\mathbf{g}_{\text{spectrum}}=\left[
		\begin{array}{c}
			2\Re\big[(\boldsymbol{\Omega}\mathbf{y}_{\boldsymbol{\theta}_T})^*\big] \\ \hdashline[2pt/2pt] -2\Im\big[(\boldsymbol{\Omega}\mathbf{y}_{\boldsymbol{\theta}_T})^*\big]
		\end{array}
		\right].
\end{equation} 

\section{Proof of Proposition 1}
\begin{proof}
	The average loss function of simultaneous
	 training $\mathcal{L}^{\pi}(\boldsymbol{\theta}_R, \boldsymbol{\theta}_T)$ (\ref{eq: joint loss}) could be expressed
	\begin{equation}
		\mathcal{L}^{\pi}(\boldsymbol{\theta}_R, \boldsymbol{\theta}_T)=\sum_{i\in\{0,1\}}P(\mathcal{H}_i) \int_{\mathcal{A}}\pi(\mathbf{a}|\mathbf{y}_{\boldsymbol{\theta}_T})\int_{\mathcal{Z}} \ell \big( f_{\boldsymbol{\theta}_R}(\mathbf{z}),i\big)p(\mathbf{z}|\mathbf{a},\mathcal{H}_i)d\mathbf{z}d\mathbf{a}.  \label{a: fuse loss ori.}
	\end{equation}
	As discussed in Section II-B, the last layer of the receiver implementation consists of a sigmoid activation function, which leads to the output of the receiver $f_{\boldsymbol{\theta}_R}(\mathbf{z})\in (0,1)$. Thus there exist a constant $b$ such that $\sup_{\mathbf{z},i} \ell \big( f_{\boldsymbol{\theta}_R}(\mathbf{z}),i\big) <b<\infty$. 
	Furthermore, for $i\in \{0,1\}$, the instantaneous values of the cross-entropy loss $\ell \big( f_{\boldsymbol{\theta}_R}(\mathbf{z}),i\big)$, the policy $\pi(\mathbf{a}|\mathbf{y}_{\boldsymbol{\theta}_T})$, and the likelihood $p(\mathbf{z}|\mathbf{a},\mathcal{H}_i)$ are continuous in variables $\mathbf{a}$ and $\mathbf{z}$. By leveraging Fubini's theorem \cite{Fubini} to exchange the order of integration in (\ref{a: fuse loss ori.}), we have
	\begin{equation}
		\begin{aligned}
			\mathcal{L}^{\pi}(\boldsymbol{\theta}_R, \boldsymbol{\theta}_T)=\sum_{i\in\{0,1\}}P(\mathcal{H}_i)\int_{\mathcal{Z}} \ell \big( f_{\boldsymbol{\theta}_R}(\mathbf{z}),i\big) \int_{\mathcal{A}}p(\mathbf{z}|\mathbf{a},\mathcal{H}_i) \pi(\mathbf{a}|\mathbf{y}_{\boldsymbol{\theta}_T}) d\mathbf{a}d\mathbf{z}.
		\end{aligned} \label{a: fuse loss exchage}
	\end{equation}
	Note that for a waveform $\mathbf{y}_{\boldsymbol{\theta}_T}$ and a target state indicator $i$, the product between the likelihood $p(\mathbf{z}|\mathbf{a},\mathcal{H}_i)$ and the policy $\pi(\mathbf{a}|\mathbf{y}_{\boldsymbol{\theta}_T})$ becomes a joint PDF of two 
	random variables $\mathbf{a}$ and $\mathbf{z}$, namely,
	\begin{equation}
		p(\mathbf{z}|\mathbf{a},\mathcal{H}_i)\pi(\mathbf{a}|\mathbf{y}_{\boldsymbol{\theta}_T})=p(\mathbf{a},\mathbf{z}|\mathbf{y}_{\boldsymbol{\theta}_T},\mathcal{H}_i). \label{a: joint prob.}
	\end{equation}
	Substituting (\ref{a: joint prob.}) into (\ref{a: fuse loss exchage}), we obtain 
	\begin{equation}
		\begin{aligned}
			\mathcal{L}^{\pi}(\boldsymbol{\theta}_R, \boldsymbol{\theta}_T)&=\sum_{i\in\{0,1\}}P(\mathcal{H}_i)\int_{\mathcal{Z}}\ell\big( f_{\boldsymbol{\theta}_R}(\mathbf{z}),i\big)\int_{\mathcal{A}}p(\mathbf{a},\mathbf{z}|\mathbf{y}_{\boldsymbol{\theta}_T},\mathcal{H}_i) d\mathbf{a}d\mathbf{z}\\
			&=\sum_{i\in\{0,1\}}P(\mathcal{H}_i)\int_{\mathcal{Z}}\ell\big( f_{\boldsymbol{\theta}_R}(\mathbf{z}),i\big)p(\mathbf{z}|\mathbf{y}_{\boldsymbol{\theta}_T},\mathcal{H}_i)d\mathbf{z}, 
		\end{aligned} \label{a: fuse loss final}
	\end{equation}
	where the second equality holds by integrating the joint PDF $p(\mathbf{z},\mathbf{a}|\mathbf{y}_{\boldsymbol{\theta}_T},\mathcal{H}_i)$ over the random variable $\mathbf{a}$, i.e.,
	$\int_{\mathcal{A}}p(\mathbf{a},\mathbf{z}|\mathbf{y}_{\boldsymbol{\theta}_T},\mathcal{H}_i) d\mathbf{a}=p(\mathbf{z}|\mathbf{y}_{\boldsymbol{\theta}_T},\mathcal{H}_i)$.
	
	Taking the gradient of (\ref{a: fuse loss final}) with respect to $\boldsymbol{\theta}_R$, we have
	\begin{equation}
		\begin{aligned}
			\nabla_{\boldsymbol{\theta }_R} \mathcal{L}^{\pi}(\boldsymbol{\theta}_R, \boldsymbol{\theta}_T)&= \sum_{i\in\{0,1\}}P(\mathcal{H}_i)\int_{\mathcal{Z}}p(\mathbf{z}|\mathbf{y}_{\boldsymbol{\theta}_T},\mathcal{H}_i)\nabla_{\boldsymbol{\theta }_R}\ell\big( f_{\boldsymbol{\theta}_R}(\mathbf{z}),i\big)d\mathbf{z}\\
			&={\nabla}_{\boldsymbol{\theta}_R}\mathcal{L}_R(\boldsymbol{\theta}_R),
		\end{aligned}
	\end{equation}
	where the second equality holds via (\ref{eq: rx loss grad.}). Thus, the proof of Proposition 1 is completed.
\end{proof}

\section{Proof of Proposition 2}
\begin{proof}
	According to (\ref{a: fuse loss final}), the gradient of the average loss function of simultaneous
	 training with respect to $\boldsymbol{\theta}_T$ is given by
	\begin{equation}
		\begin{aligned}
			\nabla_{\boldsymbol{\theta }_T} \mathcal{L}^{\pi}(\boldsymbol{\theta}_R, \boldsymbol{\theta}_T)&=\sum_{i\in\{0,1\}}P(\mathcal{H}_i)\int_{\mathcal{Z}}\ell\big( f_{\boldsymbol{\theta}_R}(\mathbf{z}),i\big) \nabla_{\boldsymbol{\theta }_T} p(\mathbf{z}|\mathbf{y}_{\boldsymbol{\theta}_T},\mathcal{H}_i)d\mathbf{z}\\
			& =  \nabla_{\boldsymbol{\theta }_T} \mathcal{L}(\boldsymbol{\theta}_R, \boldsymbol{\theta}_T),
		\end{aligned} \label{a: fuse loss tx grad. ori.}
	\end{equation}
where the last equality holds by (\ref{eq: tx loss known grad}). The proof of {Proposition 2} is completed. 
\end{proof}

\end{document}